\newcommand{\commentout}[1]{}
\newtheorem{thm}{Theorem}[section]
\newtheorem{prop}[thm]{Proposition}
\newtheorem{cor}[thm]{Corollary}
\newtheorem{ex}[thm]{Example}
\newtheorem{rmk}[thm]{Remark}
\newcommand{\nwc}{\newcommand*}
\nwc{\ben}{\begin{equation*}}
\nwc{\bea}{\begin{eqnarray}}
\nwc{\beq}{\begin{eqnarray}}
\nwc{\bean}{\begin{eqnarray*}}
\nwc{\beqn}{\begin{eqnarray*}}
\nwc{\beqast}{\begin{eqnarray*}}
\nwc{\eal}{\end{align}}
\nwc{\een}{\end{equation*}}
\nwc{\eea}{\end{eqnarray}}
\nwc{\eeq}{\end{eqnarray}}
\nwc{\eean}{\end{eqnarray*}}
\nwc{\eeqn}{\end{eqnarray*}}
\theoremstyle{remark}
\nwc{\nn}{\nonumber}
\nwc{\mb}{\mathbf}
\nwc{\ml}{\mathcal}
\newcommand{\lt}{\left}
\newcommand{\rt}{\right}
\nwc{\vep}{\varepsilon}
\nwc{\ep}{\epsilon}
\nwc{\vrho}{\varrho}
\nwc{\orho}{\bar\varrho}
\nwc{\vpsi}{\varpsi}
\nwc{\lamb}{\lambda}
\nwc{\om}{\omega}
\nwc{\Om}{\Omega}
\nwc{\al}{\alpha}
\nwc{\sgn}{\mbox{\rm sgn}}
\nwc{\IA}{\mathbb{A}} 
\nwc{\bi}{\mathbf{i}}
\nwc{\ba}{\mathbf{a}}
\nwc{\bmb}{\mathbf{b}}
\nwc{\bo}{\mathbf{o}}
\nwc{\IB}{\mathbb{B}}
\nwc{\IC}{\mathbb{C}} 
\nwc{\ID}{\mathbb{D}} 
\nwc{\IM}{\mathbb{M}} 
\nwc{\IP}{\mathbb{P}} 
\nwc{\II}{\mathbb{I}} 
\nwc{\IE}{\mathbb{E}} 
\nwc{\IF}{\mathbb{F}} 
\nwc{\IG}{\mathbb{G}} 
\nwc{\IN}{\mathbb{N}} 
\nwc{\IQ}{\mathbb{Q}} 
\nwc{\IR}{\mathbb{R}} 
\nwc{\IT}{\mathbb{T}} 
\nwc{\IZ}{\mathbb{Z}} 
\nwc{\cE}{{\ml E}}
\nwc{\cP}{{\ml P}}
\nwc{\cQ}{{\ml Q}}
\nwc{\cL}{{\ml L}}
\nwc{\cX}{{\ml X}}
\nwc{\cW}{{\ml W}}
\nwc{\cZ}{{\ml Z}}
\nwc{\cR}{{\ml R}}
\nwc{\cV}{{\ml V}}
\nwc{\cT}{{\ml T}}
\nwc{\crV}{{\ml L}_{(\delta,\rho)}}
\nwc{\cC}{{\ml C}}
\nwc{\cO}{{\ml O}}
\nwc{\cA}{{\ml A}}
\nwc{\cK}{{\ml K}}
\nwc{\cB}{{\ml B}}
\nwc{\cD}{{\ml D}}
\nwc{\cF}{{\ml F}}
\nwc{\cS}{{\ml S}}
\nwc{\cM}{{\ml M}}
\nwc{\cG}{{\ml G}}
\nwc{\cH}{{\ml H}}
\nwc{\bk}{{\mb k}}
\nwc{\bn}{{\mb n}}
\nwc{\bp}{{\mb p}}
\nwc{\bq}{{\mb q}}
\nwc{\bz}{\mb z}
\nwc{\bl}{{\mb l}}
\nwc{\bj}{{\mb j}}
\nwc{\bs}{{\mb s}}
\nwc{\by}{\mathbf{h}}
\nwc{\bZ}{\mathbf{Z}}
\nwc{\bF}{\mathbf{F}}
\nwc{\bE}{\mathbf{E}}
\nwc{\bV}{\mathbf{V}}
\nwc{\bY}{\mathbf Y}
\nwc{\br}{\mb r}
\nwc{\pft}{\cF^{-1}_2}
\nwc{\bU}{{\mb U}}
\nwc{\bG}{{\mb G}}
\nwc{\bg}{\mathbf{g}}
\nwc{\mbf}{\mathbf{f}}
\nwc{\mbe}{\mathbf{e}}
\nwc{\be}{\mathbf{e}}
\nwc{\ind}{\operatorname{I}}
\nwc{\mbx}{\mathbf{f}}
\nwc{\bb}{\mathbf{g}}
\nwc{\xmax}{f_{\rm max}}
\nwc{\xmin}{f_{\rm min}}
\nwc{\suppx}{\hbox{\rm supp} (\mbf)}
\nwc{\cI}{\IZ^2_N}
\nwc{\chis}{{\chi^{\rm s}}}
\nwc{\chii}{{\chi^{\rm i}}}
\nwc{\pdfi}{{f^{\rm i}}}
\nwc{\pdfs}{{f^{\rm s}}}
\nwc{\pdfii}{{f_1^{\rm i}}}
\nwc{\pdfsi}{{f_1^{\rm s}}}
\nwc{\thetatil}{{\tilde\theta}}
\nwc{\red}{\color{red}}
\nwc{\blue}{\color{blue}}
\nwc{\prox}{\hbox{prox}}
\nwc{\diag}{\hbox{\rm diag}}
\nwc{\supp}{{\hbox{\rm supp}}}
\nwc{\sloc}{J_{\rm f}}
\nwc{\bu}{{\mb u}}
\nwc{\bv}{{\mb v}}
\nwc{\cU}{\mathcal{U}}
\nwc{\cN}{\mathcal{N}}
\nwc{\bN}{\mathbf{N}}
\nwc{\mbm}{\mathbf{m}}
\nwc{\bw}{\mathbf{w}}
\nwc{\bom}{\mathbf{w}}
\nwc{\bt}{\mathbf{t}}
\nwc{\z}{y}
\nwc{\cY}{\mathcal{Y}}
\nwc{\bM}{\mathbf{M}}
\nwc{\half}{{1\over 2}}
\nwc{\Sf}{S_{\rm f}}
\nwc{\Jf}{J_{\rm f}}
\nwc{\nul}{\hbox{\rm null}_\IR}
\nwc{\spanR}{\hbox{\rm span}_\IR}
\nwc{\Arg}{\hbox{\rm Arg~}}
\nwc{\fdr}{S_{\rm f}}
\nwc{\phase}[1]{\exp\lt[i\measured #1\rt]}
\nwc{\im}{{\rm i}}
\nwc{\lb}{\llbracket}
\nwc{\rb}{\rrbracket}
\nwc{\modpi}{{{\rm mod}\,2\pi}}
\begin{document}

 \title{
Uniqueness Theorems for Tomographic Phase Retrieval with Few Coded Diffraction Patterns
}

\author{Albert Fannjiang 
 \address{
Department of Mathematics, University of California, Davis, California  95616, USA. Email:  {\tt fannjiang@math.ucdavis.edu}
} 
}


\maketitle 

\begin{abstract} 3D tomographic phase retrieval under the Born approximation  for  discrete objects supported on a $n\times n\times n$ grid is analyzed.  It is proved that $n$ projections are sufficient and necessary for unique determination by computed tomography (CT) with full projected field measurements and that $n+1$ coded projected diffraction patterns are sufficient for unique determination, up to a global phase factor, in tomographic phase retrieval. Hence $n+1$ is nearly, if not exactly, the minimum number of diffractions patterns needed for 3D tomographic phase retrieval under the Born approximation.

 \end{abstract}


\section{Introduction}\label{sec:intro}

Tomography is a commonly used  method in a wide range of applications such as computed tomography  \cite{CT}, 3D diffractive imaging  \cite{Devaney} and quantum state measurement \cite{Leonhardt}. 

Mathematically speaking, the forward model of tomography is based on various approximations of the nonlinear inverse scattering formulation (for example,  the Lippmann-Schwinger integral equation). A simplification common to all current tomographic methods (except for geophysical applications) is based on either the Born or the Rytov approximation. The latter reduces to computed tomography (CT) in the limit of geometrical optics.  The inversion methods of CT, which ignores the diffraction and scattering effects,  have been well studied and documented \cite{CT}. On the other hand, the phase-unwrapping problem inherent to
the Rytov approximation (see Section \ref{sec:Born})  is a largely unsolved problem and a major road block to its implementation  \cite{Devaney}. 

An additional  complication  occurs in X-ray, optical scattering \cite{fel}, electron diffraction \cite{ED, Frank}  as well as quantum state tomography \cite{Leonhardt}, where only intensity measurements can be performed. This  gives  rise to the phase problem which requires phase retrieval techniques for solutions \cite{acta}. 

This brief note considers the imaging set-up based on the Born approximation where diffraction patterns (hence intensity-only measurements) in various directions are measured and used to determine the 3D  object. 

In particular, we address the uniqueness question: Under what measurement schemes and with how many diffraction patterns, can one determine the 3D object uniquely (up to a global phase factor)? 

To answer this question in a quantitative way, it is instructive  (even imperative) to work with a discrete setting. After introducing the Born-projection approximation in Section \ref{sec:Born} and laying out the discrete framework in Section \ref{sec:discrete}, we recall some basic results about diffraction patterns in Section \ref{sec:patterns}, in particular how the use of a random mask can improve the quality of the measurement data (see also Remark \ref{rmk4.3}). In Section \ref{sec:unique} we first prove that with a random mask in the measurement of diffraction patterns, the tomographic phase retrieval problem reduces to that of CT
modulo a simple ambiguity  (Theorem \ref{tom}). We then eliminate this ambiguity by deploying a sufficiently diverse set of $n+1$ projections under  the prior constraint that the object does not become part of a line segment in any projection in the measurement scheme. As the uniqueness condition of $n$ projections required for the standard CT  
(Theorem \ref{thm:CT}) sets a lower bound on  the number of diffraction patterns for tomographic phase retrieval, the uniqueness condition of $n+1$ diffraction patterns (Theorem \ref{tom2}) is nearly optimal.  We conclude with several remarks in Section \ref{sec:conclude}.

{  \section{Born and projection approximations} \label{sec:Born}
\begin{figure}
\centering
\includegraphics[width=14cm]{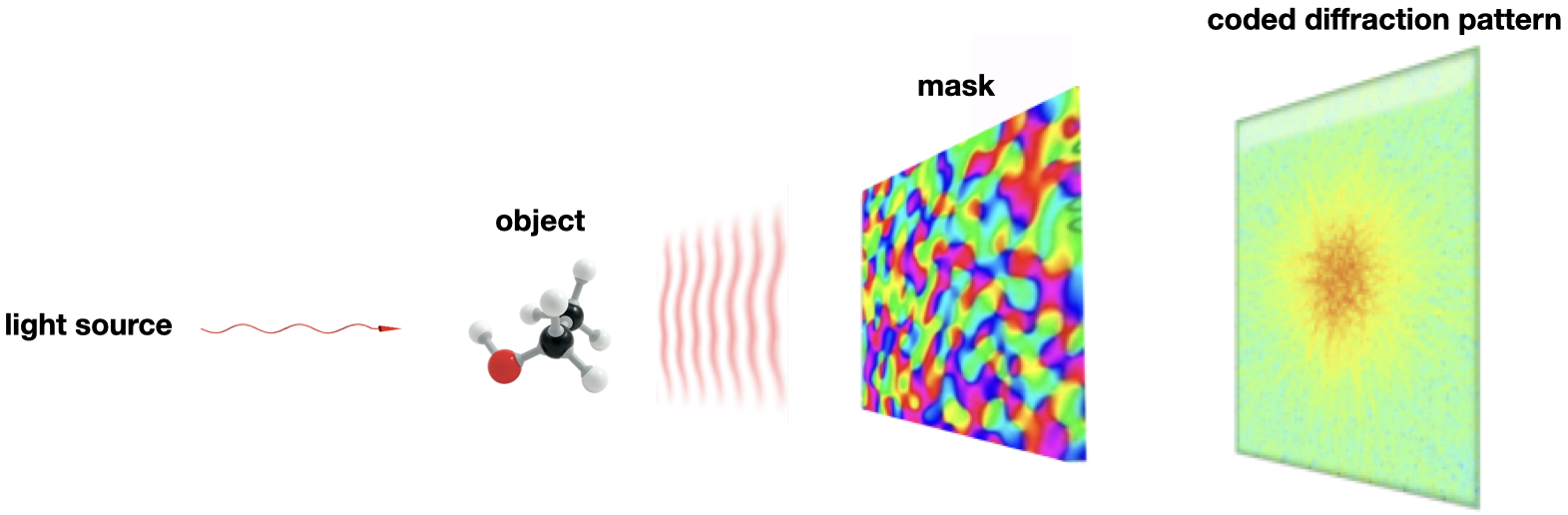}
\caption{Diffraction pattern coded by a random mask placed behind the object. Different projections can be implemented by orientating the object in the corresponding direction, with the measurement set-up fixed. See, e.g. \cite{2013} for an similar experimental set-up}
\commentout{
\includegraphics[width=8cm]{coded}
\caption{  Diffraction pattern coded by a random mask placed behind the object (courtesy of \cite{Candes}. Copyright ©2015 Society for Industrial and Applied Mathematics. Reprinted with permission. All rights
reserved). Different projections can be implemented by orientating the object in the corresponding direction, with the measurement set-up fixed. See, e.g. \cite{2013} for such an experimental set-up.}
}
\label{fig1}
\end{figure}

In scattering theory,  the full field $u=u_i+u_s$ is written as  the sum of the incident field $u_i$ and the scattered field $u_s$.
In the continuum setting, the full field $u(\br)$ is governed by the Lippmann-Schwinger equation
\beq
u(\br)=u_i(\br)+\int d\br' G(\br-\br') f(\br') u(\br')\label{LS}
\eeq
where $f$ is the inhomogeneity, also called {\em scattering potential},  and $G$ is the Green's function of the free-space Helmholtz equation \cite{Devaney}. 

\commentout{The right hand side of \eqref{LS} can be iterated to create the Born series with the first two terms given by 
\beqn
u(\br)=u_i(\br)+\int d\br' G(\br-\br') f(\br') u_i(\br')+\int d\br' G(\br-\br') f(\br')\int d\br'' G(\br'-\br'') f(\br'')+ \cdots \label{Born}
\eeqn
with higher order terms involving higher moments of $f$. }
Under the weak scatter assumption $|u_s|\ll |u_i|, $ 
$u$ in the (first-order) Born approximation is given by 
\beq
u(\br)=u_i(\br)+\int d\br' G(\br-\br') f(\br') u_i(\br').\label{Born1}
\eeq  
Under the Fresnel approximation  (with the $z$-axis as the optical axis, say), 
\beq\label{Fresnel}
G(\br)={-1\over 4\pi}{e^{\im \kappa|\br|}\over |\br|}\approx {-1\over 4\pi |z|}e^{\im \kappa|z|} e^{\im {\kappa\over 2} {x^2+y^2\over |z|}}
\eeq
and hence \eqref{Born1} becomes
\beq
\label{Fresnel1}
u_i(\br)-{e^{\im \kappa z}\over 4\pi} \int dx' dy' \int d z'{f(x',y',z')\over |z-z'|}e^{\im {\kappa\over 2} {(x-x')^2+(y-y')^2\over |z-z'|}} e^{-\im \kappa z'} u_i(x', y', z').
\eeq

We think of the scattering process as consisting of two stages: First, the plane wave ($u_i(\br)=e^{\im \kappa z}$)  illuminates 
and exits the scattering object; second, the exit wave transmits through a mask (located at $z=0$) and
propagates toward the detector. 

In the first stage,   consider the high Fresnel number regime
\beq\label{high}
N_F={\ell^2 \over \lambda z_0}\gg 1,
\eeq
where $\ell$ is the typical size to be resolved, $\lambda $ the wavelength and $z_0$ the thickness of the object.
In this limit \eqref{high}, 
\[
 {-\im \kappa\over 2\pi |z-z'|} e^{\im {\kappa\over 2} {(x-x')^2+(y-y')^2\over |z-z'|}}\longrightarrow \delta(x-x', y-y'),\quad \mbox{as}\,\, N_F\to\infty
 \]
 for all $z\neq z'$,  the exit wave
\eqref{Fresnel1} at $z=0$ is approximated by 
\beq
\label{Fresnel2}
v_B(x,y)= 1-{\im\over 2\kappa}  \int d z'{f(x,y,z')}.
\eeq
The right hand side of \eqref{Fresnel2} is the {\em projection approximation} under the first-order Born assumption. On the other hand, the exit wave with the Rytov approximation is given by 
\beqn
\label{Fresnel3}
v_R(x,y)= \exp\lt[-{\im\over 2\kappa}  \int d z'{f(x,y,z')}\rt].
\eeqn
Since
\beq
v_B-1=\ln v_R\quad \hbox{mod}\,\, 2\pi\im,
\eeq
the Born scattered field is the unwrapped phase of the Rytov approximation. 

In short, 
the Born-projection approximation is the linear approximation of the Rytov-projection approximation and both approximations employ the projection approximation \cite{BR}.

The projection approximation corresponds to light propagation through the scatterer in parallel straight lines.
Its validity, however,  depends on the spatial resolution of the imaging system as follows.  \commentout{``A coarse enough imaging resolution would prevent the detection of fine diffraction effects, hence the projection approximation would hold even in a case where, for the same sample under the same illumination conditions, the projection approximation would be invalid when the scattered intensity is measured with a position-sensitive detector having a spatial resolution that is less coarse" \cite{phase-contrast}.}

Radiation of wavelength $\lamb$  scattered by  features of  size $\ell$, that are to be resolved,  would have a maximum diffraction angle of the order of
$\Delta\theta =\lambda/\ell.$ Hence the maximum spread of the radiation at the exit plane would be $\Delta\theta z_0$
where $z_0$ is the thickness of the sample. The projection approximation is valid if the spread is much smaller than the resolution, i.e. 
\beq
\lambda z_0/\ell  \ll \ell.\label{proj}
\eeq
which is exactly equivalent to the high Fresnel number regime \eqref{high}
 \cite{phase-contrast}. 


At the second stage, the exit wave $v_B$  is first multiplied by the mask function  $\mu$  and then propagates into the far-field as $\cF (\mu \cdot  v_B)$ where $\cF$ is the Fourier transform in the transverse variables. The measured coded diffraction pattern  $|\cF (\mu \cdot v_B)|^2$ is given by
\beq
|\cF (\mu \cdot v_B)|^2&=& |\cF(\mu)|^2+{1\over\kappa}\Im\{{ \overline{\cF\mu}\cdot \cF(\mu \int f dz')}\}+{1\over 4\kappa^2} |\cF(\mu\int fdz')|^2\label{fpt}
\eeq
where $\Im$ denotes the imaginary part. 
This technique, with or without coded aperture,  is sometimes called the  {\em propagation-based phase contrast} method \cite{phase-contrast}. 

Tomographic microscopy based on the  linear forward model ignoring the nonlinear term $ |\cF(\mu\int dz' f)|^2 $ on the right hand side of \eqref{fpt} is {a form of bright-field imaging}  (see  \cite{2018,microscopy}). {As
\eqref{fpt} represents the interference pattern between the reference wave $\cF(\mu)$ and the masked object wave $-\im\cF( \mu\int f dz')/(2\kappa)$, reconstruction from the linear term in \eqref{fpt} can be performed by conventional holographic techniques \cite{Wolf69,Wolf70}. }

{Adopting the dark-field mode of  imaging}
(see \cite{FPT} where the pupil or probe function plays the role of coded aperture), we focus on the more challenging nonlinear term
 as the measurement data and analyze the inherent information content therein. 
The (nonlinear) combination of the coded aperture and the Fourier transform is a key ingredient of our approach. 
The next key ingredient to an information-based approach is discretization. 

}

\section{Discrete tomography}\label{sec:discrete}

To motivate the discrete setup, consider the continuum setting. It is a classical result that  a compactly supported function on, e.g. the cube, is uniquely determined by  the Fourier transform (magnitude \& phase) in any infinite set of projections (\cite{Helgason}, Proposition 7.8) while
for any finite set of projections, counterexamples to unique determination can be constructed (\cite{Helgason}, Proposition 7.9).

As a consequence, uniqueness with Fourier {\em intensity} data in the continuum setting  would  require additional assumptions besides an infinite number of projections. It is not currently known, however, what additional assumptions are  needed to  guarantee uniqueness with intensity-only measurements. 

Working with a discrete set-up we aim  to derive a quantitative, information-based theory of uniqueness. 
To this end,  we  adopt the framework of \cite{discrete-X} whose main advantage is preserving the fundamental  Fourier slice theorem (Theorem \ref{thm:slice}).

For simplicity, we choose the physical units so that   $\kappa =2\pi.$
Let $\lb k,l\rb$ denote the integers
between and including the integers $k$ and $l$. 
We define a 3D $n\times n \times n$ object as the set 
\beq
\label{1.1}
 f=\{f(i,j,k)\in \IC: i,j,k\in \IZ_n\}
 \eeq
 where
\beq
\label{1.2}
\IZ_n&=&\left\{\begin{array}{lll}
\lb-n/2, n/2-1\rb && \mbox{if $n$ is an even integer;}\\
 \lb-(n-1)/2, (n-1)/2\rb && \mbox{if $n$ is an odd integer.}
 \end{array}\rt.
\eeq

We define three families of line segments, the $x$-lines, $y$-lines, and $z$-lines. Formally, a $x$-line, denoted by $\ell_{x(\alpha,\beta)}(c_1,c_2)$,  is defined as 
\beq
\label{1.3'}
\ell_{x(\alpha,\beta)}(c_1,c_2): \lt[\begin{matrix}
y\\
z\end{matrix}\rt]=\lt[\begin{matrix}\alpha x+c_1\\  \beta x+c_2
\end{matrix}\rt] && {  c_1, c_2\in \IZ_{2n-1}}, \quad x\in \IZ_n
\eeq
 To avoid wraparound of $x$-lines with $|\alpha|,|\beta|\le 1$, we can zero-pad $f$ in a larger lattice $\IZ^3_p$ with $p\ge 2n-1.$ This is particularly important when it comes to define the X-ray transform by a line sum (cf. \eqref{2.8}-\eqref{2.10}) without wrapping around the object domain. 

\commentout{
More generally, we can extend the set-up in \cite{discrete-X} by letting 
\[
 |\alpha|,|\beta|\le \gamma, 
 \]
 for a constant $\gamma\ge 1$ with
 \beq
 \label{3.1}
 p\ge (\gamma+1)(n-1)+1.
 \eeq

  To fix the idea, we assume $p=2n-1$ and $|\alpha|, |\beta|\le 1$ in the present paper.
  }

Similarly, a $y$-line and a $z$-line are defined as
\beq
\ell_{y(\alpha,\beta)}(c_1,c_2):  \lt[\begin{matrix}
x\\
z\end{matrix}\rt]=\lt[\begin{matrix}\alpha y+c_1\\  \beta y+c_2
\end{matrix}\rt]
 && c_1, c_2\in \IZ_{2n-1}, \quad y\in \IZ_n,\\
\ell_{z(\alpha,\beta)}(c_1,c_2): 
 \lt[\begin{matrix}
x\\
y\end{matrix}\rt]=\lt[\begin{matrix}\alpha z+c_1\\  \beta z+c_2
\end{matrix}\rt] && c_1, c_2\in \IZ_{2n-1}, \quad z\in \IZ_n. 
\eeq
\commentout{
An $x$-line with $\alpha\neq 0$ can also be expressed as a $y$-line as
\beqn
x=y/\alpha-c_1/\alpha,&& z=\beta y/\alpha-\beta c_1/\alpha+c_2
\eeqn
and an $x$-line with $\beta\neq 0$ can also be expressed as a $z$-line as
\beqn
x=z/\beta-c_2/\beta,&& y=\alpha z/\beta-\alpha c_2/\beta+c_1
\eeqn
}
We denote the sets of all $x$-lines, $y$-lines, and $z$-lines by $\cL_x, \cL_y,$ and $\cL_z$, respectively. 

\commentout{
The intersection of $\cL_x$ and $\cL_y$ consists of those $x$lines or $y$-lines with $\alpha=\pm 1.$ For $|\alpha|=1$, \eqref{1.3'} can be written as
\beq
\label{1.6}
 \lt[\begin{matrix}
x\\
z\end{matrix}\rt]=\lt[\begin{matrix} y/\alpha-c_1/\alpha\\  \beta y/\alpha-\beta c_1/\alpha+c_2
\end{matrix}\rt],\quad \alpha=\pm 1.
\eeq
Likewise, the intersection of $\cL_x$ and $\cL_z$ consists of those $x$-lines or $z$-lines with $\beta=\pm 1.$ For $|\beta|=1$, \eqref{1.3'} can be written as
\beq\label{1.6'}
 \lt[\begin{matrix}
x\\
y\end{matrix}\rt]=\lt[\begin{matrix}z/\beta -c_2/\beta\\  \alpha z/\beta +c_1-\alpha c_2/\beta 
\end{matrix}\rt],\quad\beta=\pm 1. 
\eeq
Intersection of $\cL_y$ and $\cL_z$ has an analogous form.
}

Also, we denote the family of lines that corresponds to a fixed pair $(\alpha,\beta)$  and variable intercepts $(c_1,c_2)$ by $\ell_{x(\alpha,\beta)}, \ell_{y(\alpha,\beta)}$ and $\ell_{z(\alpha,\beta)}$ for a family of parallel $x$-lines, $y$-lines, and $z$-lines, respectively.
Note that $\ell_{x(1,\beta)}=\ell_{y(1,\beta)}, \ell_{x(\alpha,1)}= \ell_{z(1,\alpha)}$ and
$\ell_{y(\alpha,1)}= \ell_{z(\alpha,1)}.$

Let $f_x$ be the continuous interpolation of $f$ in the directions perpendicular to $x$ as follows:
\beq\label{1.5}
f_x(i,y,z)&=& \sum_{j\in \IZ_n}\sum_{k\in \IZ_n} f(i,j,k) D_p(y-j)D_p(z-k),\quad y,z\in \IR
\eeq
where $D_p$ is the $p$-periodic Dirichlet kernel given by
\beq
D_p(t)={1\over p} \sum_{l\in \IZ_{p}} e^{\im 2\pi l t/p}
&=&\lt\{\begin{matrix}1,& t=mp,\quad m\in \IZ\\
{\sin{(\pi t)}\over p\sin{(\pi t/p)}},& \mbox{else}.
\end{matrix}\rt.\nn
\eeq
In particular, $D_p(t)=0$  for $t\in \IZ/\IZ_p$, i.e. $[D_p(i-j)]_{i,j\in \IZ_p}$ is the $p\times p$ identity matrix. 

Similarly we define the interpolation of $f$ perpendicular to $y$ and $z$, respectively, as
\beq
f_y(x,j,z)&=& \sum_{i\in \IZ_n}\sum_{k\in \IZ_n} f(i,j,k) D_p(x-i)D_p(z-k),\quad x,z\in \IR;\label{1.5'}\\
f_z(x,y,k)&=& \sum_{i\in \IZ_n}\sum_{j\in \IZ_n} f(i,j,k) D_p(x-i)D_p(y-j), \quad x,y\in \IR.\label{1.5''}
\eeq
By interpolating from the grid points \eqref{1.5}-\eqref{1.5''}, we have extended $f$ from $\IZ_p^3$ to the hyperplanes
$x=i$, $y=j$ or $z=k,$ where $i,j,k\in \IZ_p$.

\commentout{
Deploying  the discrete set of slopes 
\beq\label{1.15}
\alpha,\beta\in \{2i/n: i\in \IZ_n\}
\eeq
enforces  intersection of each line with some grid points of the discrete object $f$. 
However, with the interpolation \eqref{1.5}, we can consider more general set $S$ of slopes with $n$ distinct values in the range $(-1,1)$. 
}

\commentout{The Dirichlet kernel performs trigonometric interpolation of length $p$ by zero padding the discrete object $f$ to length $p$ in the corresponding variable. We choose $ p\ge 2n-1$ as the length of the interpolation kernel to avoid wraparound, due to the periodic nature of the Dirichlet kernel, and  view the object as defined in $\IZ^3_{p}$ with zero padding outside of $ \IZ_n^3$. 
}

The main, and only, purpose for interpolating the discrete object is to make possible the definition of a diversified set of the discrete X-ray transforms. 
Having  extended the domain of $f$ to the hyperplanes
$x=i$, $y=j$ or $z=k,$ where $i,j,k\in \IZ_{2n-1}$, 
we define the discrete X-ray transforms as the line sums
\beq
 \label{2.8} 
 f_{x(\alpha,\beta)}(c_1,c_2)&=& \sum _{i\in \IZ_n} f_x(i,\alpha i+c_1,\beta i+c_2),\\
f_{y(\alpha,\beta)}(c_1,c_2)&=& \sum _{j\in\IZ_n} f_y(\alpha j+c_1,j, \beta j+c_2)\label{2.9}\\
f_{z(\alpha,\beta)}(c_1,c_2)&=& \sum _{k\in \IZ_n} f_z(\alpha k+c_1,\beta k+c_2, k)\label{2.10}
\eeq
with $ c_1, c_2\in \IZ_{2n-1}$. With zero-padding, we take $\IZ_{p}^2,$ $ p\ge 2n-1,$ as the domain of the X-ray transforms. 

Without the interpolation \eqref{1.5}-\eqref{1.5''}, the discrete X-ray transforms are not well-defined except for  $(\alpha,\beta)=(\pm 1,0), (0,\pm 1), (\pm 1,\pm 1).$ 
For simplicity of terminology, we shall refer to X-ray transforms simply as {\em projections}.

\commentout{with $|\alpha|,|\beta|\le 1$, where the range of  $c_1, c_2$ is $\IZ_{n}$. 
With $|\alpha|, |\beta|<1$, the arguments $\alpha i+c_1,\beta i+c_2$ on the right hand side of \eqref{2.8} remain in the range
$[-(p-1)/2, (p-1)/2]$ for $p\ge 2n-1$. }

\commentout{Hence, for a given object $f$ and a line $\ell$, we define the discrete X-ray transform of $f$ for the line $\ell$ as
\beq
Pf (\ell)&=& \lt\{\begin{matrix} 
P_x f(\ell),&& \ell\in \cL_x\\
P_y f(\ell),&& \ell\in \cL_y\\
P_z f(\ell),&& \ell\in \cL_z. 
\end{matrix}
\rt.
\eeq
}

The 3D Fourier transform $\widehat f$ of the object $f$, supported in $\IZ_n^3\subset \IZ_p^3$,  is given by
\beq
\label{1.10'}
\widehat f(\xi,\eta,\zeta)&=&\sum_{i,j,k\in \IZ_n} f(i,j,k)e^{-\im 2\pi(\xi i+\eta j+ \zeta k)/p}.
\eeq
Note that $\widehat f$  in \eqref{1.10'} is a $p$-periodic function band-limited to $\IZ_n^3$. The associated 1-D and 2-D (partial) Fourier transforms are similarly defined $p$-periodic band-limited functions. 

The Fourier slice theorem concerns the 2-D discrete Fourier transform $\widehat f_x(\alpha,\beta)$, defined as
\beq
\widehat f_{x(\alpha,\beta)}(\eta,\zeta)&=& \sum_{j,k\in \IZ_{n}} f_{x(\alpha,\beta)}(j,k)e^{-\im 2\pi(\eta j+ \zeta k)/p}
\eeq
and the 3-D discrete Fourier transform given in \eqref{1.10'}. 

The following Fourier slice theorem resembles that of the continuous case \cite{CT} and plays a central role in the framework of discrete tomography.
\begin{thm}\label{thm:slice}\cite{discrete-X}
(Fourier slice theorem) For a given family of $x$-lines $\ell_x(\alpha,\beta)$ with fixed slopes $(\alpha,\beta)$ and variable intercepts $(c_1,c_2)$. Then the 2D discrete Fourier transform  $\widehat f_{x(\alpha,\beta)}$ of the $x$-projection $f_{x(\alpha,\beta)}$ and the 3D discrete Fourier transform $\widehat f$ of the object $f$ satisfy the equation
\beq\label{proj-x}
\widehat f_{x(\alpha,\beta)}(\eta,\zeta)&=& \widehat f(-\alpha \eta-\beta \zeta, \eta,\zeta).
\eeq
Likewise, we have
\beq
\widehat f_{y(\alpha,\beta)}(\xi,\zeta)&=& \widehat f(\xi, -\alpha \xi-\beta \zeta, \zeta),\\ 
\widehat f_{z(\alpha,\beta)}(\xi,\eta)&=& \widehat f(\xi,\eta,  -\alpha \xi-\beta \eta).\label{proj-z}
\eeq
\end{thm}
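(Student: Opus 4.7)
The plan is a direct unfolding of the definitions, with one nontrivial Fourier identity doing the real work. First I would substitute the line-sum formula $f_{x(\alpha,\beta)}(c_1,c_2)=\sum_{i\in\IZ_n} f_x(i,\alpha i+c_1,\beta i+c_2)$ into the 2D DFT defining $\widehat f_{x(\alpha,\beta)}(\eta,\zeta)$, and then expand the interpolation $f_x(i,y,z)=\sum_{j,k\in\IZ_n}f(i,j,k)D_p(y-j)D_p(z-k)$. Swapping the order of summation, the intercepts $c_1,c_2$ (summed over $\IZ_p$) appear only inside one Dirichlet kernel each, multiplied by the Fourier character $e^{-\im 2\pi(\eta c_1+\zeta c_2)/p}$.

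The critical identity is that, for any real shift $m$ and any $\eta\in\IZ_p$,
\[\sum_{c\in\IZ_p}D_p(c-m)\,e^{-\im 2\pi \eta c/p}=e^{-\im 2\pi \eta m/p},\]
which follows immediately from the spectral representation $D_p(t)=p^{-1}\sum_{l\in\IZ_p}e^{\im 2\pi l t/p}$ and orthogonality of characters on $\IZ_p$. This is essentially the reason the Dirichlet kernel is the right trigonometric interpolator here: the identity remains valid even when the shift $m=j-\alpha i$ (respectively $k-\beta i$) is non-integer, because $D_p$ is defined as a genuine trigonometric polynomial on $\IR$. Applying it to the $c_1$ and $c_2$ sums replaces each Dirichlet factor by a pure phase $e^{-\im 2\pi \eta(j-\alpha i)/p}$ or $e^{-\im 2\pi \zeta(k-\beta i)/p}$, which splits into an object-index piece $e^{-\im 2\pi(\eta j+\zeta k)/p}$ and an $i$-dependent piece $e^{\im 2\pi(\alpha\eta+\beta\zeta)i/p}$.

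What remains is the triple sum $\sum_{i,j,k\in\IZ_n}f(i,j,k)\,e^{-\im 2\pi[(-\alpha\eta-\beta\zeta)i+\eta j+\zeta k]/p}$, which is exactly $\widehat f(-\alpha\eta-\beta\zeta,\eta,\zeta)$ by the definition of the 3D DFT. The analogous formulas for $y$- and $z$-line projections follow by the same argument with the roles of the three coordinates permuted, and so require no separate work. I do not foresee any real obstacle beyond bookkeeping: the intercepts genuinely range over $\IZ_p$ with $p\geq 2n-1$ to prevent wraparound of the line sums, and one must treat $\eta,\zeta$ as elements of $\IZ_p$ in order for the character orthogonality invoked above to hold.
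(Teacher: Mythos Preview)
Your argument is correct. The paper does not supply its own proof of this theorem; it is quoted from \cite{discrete-X}, and your direct computation---substituting the line-sum and Dirichlet-interpolation formulas into the 2D DFT and collapsing each intercept sum via the identity $\sum_{c\in\IZ_p}D_p(c-m)\,e^{-\im 2\pi\eta c/p}=e^{-\im 2\pi\eta m/p}$---is exactly the standard route taken in that reference. Your remark that $\eta,\zeta$ must lie in $\IZ_p$ for the character-orthogonality step is accurate and worth stating explicitly, and your observation that the 2D DFT of the projection should be summed over $\IZ_p$ (rather than $\IZ_n$ as printed in the paper's display defining $\widehat f_{x(\alpha,\beta)}$) is also correct, since the projection is supported on $\IZ_{2n-1}^2$.
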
 

\subsection{Continuum limit}
One can  justify the above discrete framework, especially the interpolation scheme \eqref{1.5}-\eqref{1.5''} and the related  line average 
\eqref{2.8}-\eqref{2.10}, from the perspective of continuum limit. 

 Suppose the discrete object $f$ above is the restriction of some smooth function $f_*$ supported on $[-1/2,1/2]^3$ in the sense that 
\[
f(i,j,k)=f_*\Big({i\over n}, {j\over n}, {k\over n}\Big)
\]
or some local average of $f_*$ about each grid point.
\commentout{With this rescaling the sampling pattern changes from \eqref{samples} to 
\beq\label{samples2}
\Big\{(w_1,w_2): w_1, w_2 = -n+{1}, -n+{1\over 2},\ldots,  -1, -\half, 0, \half,  \ldots,  n-{1} \Big\}
\eeq
}
As $p\to \infty$, the Dirichlet kernel has the limit
\[
\lim_{p\to\infty} pD_p(pt)=\delta(t), 
\]
Dirac's delta function. For a sufficiently smooth $f_*$, 
 the right hand side of \eqref{1.5}, after proper normalization, approaches the limit
\beqn
 \int  f_*(x,y',z') \delta (y-y')\delta(z-z') dy' dz'&=&f_*(x,y,z) 
\eeqn
 In other words, the interpolation becomes exactly an identity  in the continuum limit. Likewise, the discrete X-ray transforms \eqref{2.8}-\eqref{2.10}, after proper normalization, become  line integrals (i.e. the continuous X-ray transforms). 

Finally, in the continuum limit, Theorem \ref{thm:slice} gives rise to the standard Fourier slice theorem. 
In other words, the discrete framework is a structure-preserving discretization of the continuous setting. 

\section{Diffraction patterns}\label{sec:patterns}

For ease of notation, we denote  by $\bt$ the direction of projection, $x(\alpha,\beta), y(\alpha,\beta)$ or $z(\alpha,\beta)$. Let $\cT$ denote the set of directions $\bt$ employed in the tomographic measurement. {  Let $p=2n-1$}. 

Let the Fourier transform of the projection $f_\bt$ be written as 
\[
F_\bt(e^{-\im 2\pi\bw})=\sum_{\bn\in \IZ_p^2} e^{-\im 2\pi \bn\cdot\bw} f_\bt(\bn),\quad \bw\in \Big[-\half,\half \Big]^2,
\]
where  $f_\bt$ vanishes outside 
 $\IZ^2_{n}$. {In the absence of a random mask ($\mu\equiv 1$)}, the continuous diffraction pattern in the far field  can be written as  \beq
|F_\bt(e^{-\im 2\pi\bw})|^2= \sum_{\bn\in   \IZ_{2p-1}^2}\lt\{\sum_{\bn'\in \IZ_p^2} f_\bt(\bn'+\bn)\overline{f_\bt(\bn')}\rt\}
   e^{-\im 2\pi \bn\cdot \bom}, \quad \bom\in  \Big[-\half,\half \Big]^2,
   \label{auto}
   \eeq
\cite{unique}. 
 Here and below the over-line notation means
complex conjugacy. The expression in the brackets  in \eqref{auto} is the autocorrelation function of $f_\bt$.

The diffraction patterns are then uniquely determined by  sampling on the grid
\beq
\label{nyquist1}
\bw\in {1\over 2p-1} \IZ_{2p-1}^2
\eeq
or by Kadec's $1/4$-theorem  on any following  irregular grid \cite{Young}
\beq
\label{nyquist2}
\{\bw_{jk},\,\, j,k\in \IZ_{2p-1}: |(2p-1)\bw_{jk}-(j,k)|<1/4\}.  
\eeq
With the Nyquist, regular \eqref{nyquist1} or irregular \eqref{nyquist2},  sampling, the diffraction pattern contains the same information as does  the autocorrelation function of $f_\bt$. 

\subsection{Inherent ambiguities}

The following standard result explicates all the ambiguities corresponding to the same diffraction pattern.

\begin{prop}
\label{prop1}\cite{Hayes}
Let the $z$-transform 
$F_\bt(\bz) = \sum_{\bn\in \IZ^2_{p}} f_\bt(\bn) \bz^{-\bn}$ 
 be
given by
\beq
F_\bt(\bz)=\alpha \bz^{-\mbm} \prod_{k=1}^q F_k(\bz),\quad  \mbm\in \IN^2, \quad \alpha\in \IC\label{21}
\eeq
where $F_k, k=1,\dots,q,$ are 
non-monomial irreducible polynomials. Let $G_\bt(\bz)$ be
the $\bz$-transform of another finite array $g_\bt(\bn)$. 
Suppose 
$|F_\bt(e^{-\im 2\pi\bw})|=|G_\bt (e^{-\im 2\pi\bw})|,\forall \bw\in [0,1]^2$. Then \beq
\label{21'}
G_\bt(\bz)=|\alpha| e^{\im \theta} \bz^{-\bq}
\lt(\prod_{k\in I} F_k(\bz)\rt)
\lt(\prod_{k\in I^c} \overline{F_k(1/\bar{\bz})}\rt),\quad\mbox{for some}\quad \bq\in \IN^2,\, \theta\in \IR, 
\eeq
 where $I$ is a subset of $\{1,2,\dots,q\}$. 
\end{prop}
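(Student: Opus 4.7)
The plan is to recast the equality of Fourier magnitudes on the unit torus as an equality of Laurent polynomials, and then apply unique factorization in $\IC[z_1^{\pm 1},z_2^{\pm 1}]$ to read off the allowed ambiguities.

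First, I would introduce the \emph{autocorrelation polynomial}
\[
A_F(\bz) := F_\bt(\bz)\,\overline{F_\bt(1/\bar{\bz})},
\]
which is a Laurent polynomial in $\bz$ since $F_\bt$ is a polynomial in $\bz^{-1}$ (up to the monomial factor $\bz^{-\mbm}$). On the unit torus $\bz = e^{-\im 2\pi \bw}$ we have $1/\bar{\bz} = \bz$, so $A_F(e^{-\im 2\pi\bw}) = |F_\bt(e^{-\im 2\pi\bw})|^2$. By hypothesis $|F_\bt|^2 = |G_\bt|^2$ on the torus, so $A_F$ and $A_G$ agree on $\mathbb{T}^2$; since they are Laurent polynomials, agreement on $\mathbb{T}^2$ forces them to be equal as elements of $\IC[\bz^{\pm 1}]$. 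Thus
\[
F_\bt(\bz)\,\overline{F_\bt(1/\bar{\bz})} \;=\; G_\bt(\bz)\,\overline{G_\bt(1/\bar{\bz})}.
\]

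Next I would invoke unique factorization. The ring $\IC[z_1,z_2]$ is a UFD, and clearing the monomial factors $\bz^{-\mbm}$, $\bz^{-\bq}$, the identity above becomes an equality of ordinary polynomials. Substituting the factorization $F_\bt = \alpha \bz^{-\mbm}\prod_k F_k$ and noting that $\overline{F_k(1/\bar{\bz})}$, after multiplication by a suitable monomial to clear negative powers, is again irreducible (conjugation preserves irreducibility, and the involution $\bz\mapsto 1/\bar{\bz}$ together with the monomial normalization takes irreducibles to irreducibles), the left side factors into $2q$ irreducibles: the $F_k$ and the $\overline{F_k(1/\bar{\bz})}$ (up to monomials and a positive scalar $|\alpha|^2$). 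The same must hold for the right side. Writing $G_\bt = \beta\,\bz^{-\bq}\prod_j G_j$ with $G_j$ irreducible and non-monomial, matching factors on both sides shows that each $G_j$ equals, up to a unit, some $F_k$ or some $\overline{F_k(1/\bar{\bz})}$, and moreover each $F_k$ must appear in $G_\bt\,\overline{G_\bt(1/\bar{\bz})}$ exactly once on the ``$G_\bt$ side'' or once on the ``$\overline{G_\bt(1/\bar{\bz})}$ side''. This partitions $\{1,\dots,q\}$ into a subset $I$ (factors kept as $F_k$) and its complement $I^c$ (factors replaced by $\overline{F_k(1/\bar{\bz})}$), yielding the stated form of $G_\bt$.

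Finally I would pin down the scalar and monomial constants: comparing leading coefficients in the autocorrelation equation gives $|\beta|^2 = |\alpha|^2$, hence $\beta = |\alpha|e^{\im\theta}$ for some real $\theta$; the monomial $\bz^{-\bq}$ absorbs whatever shifts arise from re-normalizing the conjugate-reflected factors so that $G_\bt$ is a genuine polynomial in $\bz^{-1}$. The main obstacle is the bookkeeping in this last step—keeping track of which monomial powers are produced when one passes from an irreducible factor $F_k(\bz)$ to its conjugate reflection $\overline{F_k(1/\bar{\bz})}$, and confirming that these can always be re-absorbed into a single overall monomial $\bz^{-\bq}$. This is straightforward but notationally delicate, since one must verify that the two sides of the autocorrelation identity have matching overall degrees in each of $z_1$ and $z_2$.
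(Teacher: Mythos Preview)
The paper does not supply its own proof of this proposition: it is stated with the citation \cite{Hayes} and used as a black box, so there is no in-paper argument to compare against. Your outline---passing to the autocorrelation Laurent polynomial, using that equality on $\mathbb{T}^2$ forces equality of Laurent polynomials, and then invoking unique factorization in $\IC[z_1^{\pm1},z_2^{\pm1}]$ together with the involution $P(\bz)\mapsto\overline{P(1/\bar\bz)}$---is exactly the classical argument due to Hayes, and it is correct. The bookkeeping worry you flag about the monomial $\bz^{-\bq}$ is not a genuine obstacle: in the Laurent polynomial ring the units are precisely the nonzero scalar multiples of monomials, so once unique factorization has matched the non-monomial irreducibles, whatever unit remains is automatically of the form $|\alpha|e^{\im\theta}\bz^{-\bq}$.
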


\begin{rmk}\label{rmk1}
The undetermined monomial factor $\bz^{-\bq}$ in \eqref{21'} corresponds to the 
translation invariance of the Fourier intensity data while 
the altered factors $\overline{F_k(1/\bar{\bz})}$ corresponds to the conjugate inversion invariance
of the Fourier intensity data (see Corollary \ref{cor1} below). 
The conjugate inversion of $f_\bt$, called the twin image, is defined by  $\mb{\rm Twin} (f_\bt) (\bn)= \bar f_\bt(-\bn)$. 

\end{rmk}

{  Next consider  a random mask
$
\mu(\bn)=e^{\im \phi(\bn)}
$
where  $\phi(\bn)$ are independent, continuous random variables over $[-\pi,\pi)$.
To fix the idea, let  the mask be placed between the object and the detectors (Figure \ref{fig1}) so that the measured diffraction pattern is the intensities of the Fourier transform of the masked projection 
 $\widetilde f_\bt(\bn) =f_\bt(\bn) \mu(\bn)$, i.e. the $\mu$-coded diffraction pattern. 
 }

 Let
 $f_\bt$ be not part of a line object. An object is part of a line object  if its support  is a subset of  a line. 
 Consequently, the masked projection $\widetilde f_\bt(\bn)$ is not part of a line object. 

Recall  \cite{unique} that the $z-$transform of the non-line masked object projection
is irreducible, up to a monomial as stated below. 
\begin{prop}\cite{unique}
Suppose $f_\bt$ is not a line object and let $\mu$ be the phase mask with phase at each point
 continuously and independently distributed over $[-\pi,\pi)$. 
Then with probability one 
the $z$-transform of the masked object $\widetilde f_\bt =f_\bt\odot \mu$ does not have any non-monomial irreducible
polynomial factor. 
\label{prop2}
\end{prop}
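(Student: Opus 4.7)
The plan is to reduce the claim to a measure-theoretic statement and then invoke generic irreducibility on a prescribed Newton polytope. Let $S=\supp(f_\bt)\subset\IZ^2_n$ and write
\[
\widetilde F_\bt(\bz)=\sum_{\bn\in S} f_\bt(\bn)\,e^{\im\phi(\bn)}\,\bz^{-\bn}.
\]
After multiplying by a monomial $\bz^{-\bm}$ to make all exponents non-negative, we view $\widetilde F_\bt$ as a bivariate polynomial $F(\bz;\phi)$ with coefficients $a_\bn=f_\bt(\bn)e^{\im\phi(\bn)}$. Since $f_\bt$ is not a line object, $S$ contains three non-collinear points, so the Newton polytope $N(F)$ is two-dimensional; crucially, it is independent of $\phi$ because each factor $e^{\im\phi(\bn)}$ is nonzero. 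The goal reduces to showing that, almost surely in $\phi$, $F$ admits no factorization $F=G\cdot H$ with both $G$ and $H$ non-monomial.

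By Ostrowski's theorem any such factorization forces a Minkowski decomposition $N(F)=N(G)+N(H)$ into lattice polygons, and there are only finitely many such decompositions of $N(F)$ up to translation. For each admissible decomposition $N(F)=Q_1+Q_2$, the set of coefficient vectors $(a_\bn)_{\bn\in S}$ that split accordingly forms an algebraic subvariety of $\IC^{|S|}$, obtainable by eliminating the coefficients of $G$ and $H$ via resultants. Pulled back along the real-analytic map $\phi\mapsto (f_\bt(\bn)e^{\im\phi(\bn)})$, each defining equation becomes a real-analytic function of $\phi\in[-\pi,\pi)^S$. Such a function is either identically zero or has Lebesgue-measure-zero zero set; since the $\phi(\bn)$ are independent with continuous marginal distributions, the joint distribution is absolutely continuous with respect to Lebesgue measure, so any Lebesgue-null set carries probability zero.

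The crux is therefore to verify that none of these real-analytic equations vanishes identically --- equivalently, for each admissible Minkowski decomposition $Q_1+Q_2=N(F)$, there exists at least one choice of phases $\phi^{*}$ for which $F$ does not factor accordingly. This is the classical assertion that, on a fixed two-dimensional lattice Newton polytope, the generic polynomial is irreducible. The hypothesis that $f_\bt$ is not a line object is essential precisely here: if $S$ were collinear, $N(F)$ would be one-dimensional and the fundamental theorem of algebra would force $F$ to split into linear factors regardless of $\phi$. With $N(F)$ two-dimensional, one can produce an explicit irreducible specialization (e.g.\ specializing to a univariate polynomial along a direction transverse to an edge of $N(F)$ and invoking an Eisenstein-type argument) or invoke a generic-irreducibility theorem as in \cite{unique}. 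This step --- ruling out \emph{every} admissible Minkowski decomposition --- is the main obstacle; once non-identical vanishing is established, the finite union of Lebesgue-null zero sets has measure zero in $[-\pi,\pi)^S$, and the claim follows.
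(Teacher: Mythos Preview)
The paper does not give a proof of this proposition at all: it is quoted verbatim from \cite{unique} and used as a black box, so there is no in-paper argument to compare your proposal against. Your overall architecture---Ostrowski's theorem to pass to finitely many Minkowski decompositions of $N(F)$, the observation that each decomposition cuts out an algebraic subvariety of the coefficient space, and the real-analytic zero-set lemma to conclude measure zero once non-identical vanishing is known---is sound and is indeed the shape of the argument in \cite{unique}.

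The genuine gap is exactly where you locate it, and your two proposed ways of closing it do not work. An Eisenstein-type criterion is a divisibility condition on coefficient \emph{magnitudes}; here the moduli $|a_\bn|=|f_\bt(\bn)|$ are fixed and you may only rotate phases, so you cannot in general manufacture an Eisenstein prime. Invoking ``a generic-irreducibility theorem as in \cite{unique}'' is circular, since that is the very result under discussion. The subtlety you are skating over is that the phase map $\phi\mapsto(f_\bt(\bn)e^{\im\phi(\bn)})$ lands only on a real $|S|$-dimensional torus $T\subset(\IC^*)^{|S|}$, and a proper complex subvariety $V\subsetneq\IC^{|S|}$ can in principle contain a real torus of this dimension; a bare dimension count does not rule out $T\subset V$. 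The clean fix is: once you know $V$ is proper (this is the classical generic-irreducibility statement for bivariate polynomials with a two-dimensional Newton polytope, and can be cited from standard sources independent of \cite{unique}), take a nonzero defining polynomial $P$ of $V$, substitute $a_\bn=f_\bt(\bn)\lambda_\bn$, and note that a polynomial in $(\lambda_\bn)$ that vanishes on the unit torus $\{|\lambda_\bn|=1\}$ vanishes identically (fix all but one variable and use that a one-variable polynomial vanishing on a circle is zero; iterate). Hence $P$ cannot vanish on all of $T$, which supplies the missing specialization $\phi^*$ and lets your measure-zero argument go through.
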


The masked object is also called the {\em exit wave} in the parlance of optics literature. 
In other words, a coded diffraction pattern is just the plain diffraction pattern of
a masked object. 


The following corollary will be useful for subsequent analysis. 
\begin{cor}\label{cor1} Under the assumptions of Proposition \ref{prop2}, if another masked object projection $\widetilde g_\bt:=\nu g_\bt$  produces
the same diffraction pattern as $\widetilde f_\bt=\mu \odot f_\bt$, then for some $\bp$ and $\theta$
\beq\label{r1}
 \widetilde f_\bt(\bn+\bp)&=& e^{-\im \theta}\widetilde g_\bt (\bn) \quad\mbox{or}\quad
 e^{\im \theta}\, \mb{\rm Twin}(\widetilde g_\bt)(\bn)
\eeq
for all $ \bn$.
\end{cor}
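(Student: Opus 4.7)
The plan is to combine Proposition~\ref{prop1} (Hayes' characterization of Fourier-magnitude ambiguities) with Proposition~\ref{prop2} (irreducibility, up to a monomial, of the $z$-transform of a masked non-line object). Under the hypotheses of Proposition~\ref{prop2}, the $z$-transform $F_\bt(\bz)$ of $\widetilde f_\bt = \mu \odot f_\bt$ admits, with probability one, no non-monomial irreducible factor other than itself; consequently, in the factorization \eqref{21} we may take $q=1$ and $F_1(\bz) = \alpha^{-1}\bz^{\mbm} F_\bt(\bz)$, so that the full irreducible part of $F_\bt$ collapses into a single factor.

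Next, since $\widetilde g_\bt$ produces the same diffraction pattern as $\widetilde f_\bt$, Proposition~\ref{prop1} applies and forces its $z$-transform $G_\bt$ into one of the forms
\[
G_\bt(\bz) = |\alpha| e^{\im\theta} \bz^{-\bq} F_1(\bz) \quad\text{or}\quad G_\bt(\bz) = |\alpha| e^{\im\theta} \bz^{-\bq} \overline{F_1(1/\bar{\bz})},
\]
depending on whether $I=\{1\}$ or $I=\emptyset$; no further choices of $I$ are possible since $\{1,\ldots,q\}$ has only one element. These two options are the only ambiguities permitted once the irreducible part of $F_\bt$ is known to be a single factor.

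Finally, I would translate each case back to the spatial domain. Multiplication by a monomial $\bz^{-\bp}$ corresponds to a spatial shift by $\bp$; multiplication by a unit-modulus constant $e^{\im\theta}$ is a global phase; and the conjugate inversion $\overline{F_1(1/\bar{\bz})}$ corresponds exactly to the twin image $\mb{\rm Twin}(\widetilde f_\bt)(\bn) = \overline{\widetilde f_\bt(-\bn)}$, as noted in Remark~\ref{rmk1}. Absorbing the monomial $\bz^{\mbm}$ from $F_1$ into $\bz^{-\bq}$ and renaming the resulting shift and phase, the two options reduce precisely to the two alternatives in \eqref{r1}.

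The argument is essentially bookkeeping once both propositions are in hand, so I do not anticipate a conceptual obstacle. The only mild subtlety is keeping the direction of the shift straight: multiplication by $\bz^{-\bp}$ in the $z$-domain corresponds to $g_\bt(\bn) = c\,f_\bt(\bn-\bp)$ in the spatial domain, which is why the conclusion is naturally phrased as $\widetilde f_\bt(\bn+\bp) = e^{-\im\theta}\widetilde g_\bt(\bn)$ rather than with the shift on the right; an analogous sign-tracking is needed in the twin case where the conjugate inversion flips the sign of the translate.
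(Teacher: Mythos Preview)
Your proposal is correct and follows essentially the same route as the paper: invoke Proposition~\ref{prop2} to force $q=1$ in the factorization of Proposition~\ref{prop1}, read off the two alternatives from \eqref{21'}, and then pass back to the spatial domain via Remark~\ref{rmk1}. The paper's proof is slightly more compressed (it writes the two $z$-transform alternatives directly and then substitutes $\bz=e^{-\im 2\pi\bw}$ before inverting), but the content is identical.
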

\begin{proof} Let $\widetilde F_\bt$ and $\widetilde G_\bt $ be the $z$-transforms of $\widetilde f_\bt$ and $\widetilde g_\bt$, respectively. 
By Proposition  \ref{prop2} and \eqref{21'}, 
\beqn
\widetilde G_\bt(\bz)=e^{\im \theta} \bz^{-\bp}
\widetilde F_\bt (\bz)\quad\mbox{or}\quad e^{\im \theta} \bz^{-\bp}
\overline{\widetilde F_\bt (1/\bar{\bz})},\quad\mbox{for some}\,\,\bp,\, \theta\,\,\mbox{and all}\,\,\bz. 
\eeqn
which after substituting $\bz=\exp{(-\im 2\pi \bw)}$ becomes 
\beqn
\widetilde G_\bt (e^{-\im 2\pi\bw})=e^{\im \theta} e^{\im \bw\cdot\bp}
\widetilde F_\bt (e^{-\im 2\pi\bw})\quad\mbox{or}\quad e^{\im \theta} e^{\im \bw\cdot\bp}
\overline{\widetilde F_\bt (e^{-\im 2\pi\bw})},\quad\mbox{for some}\,\,\bp,\, \theta\,\,\mbox{and all}\,\,\bz. 
\eeqn

Note that $\widetilde G_\bt(e^{-\im 2\pi\bw})$ and $\widetilde F_\bt(e^{-\im 2\pi\bw})$ are the Fourier transforms of $\widetilde g_\bt$ and
$\widetilde f_\bt$, respectively. 
Therefore in view of Remark \ref{rmk1} we have 
\beqn
\widetilde g_\bt(\bn)&=& e^{\im \theta} \widetilde f_\bt(\bn - \bp)\quad\mbox{\rm or}\quad  e^{\im \theta}\, \mbox{\rm Twin}(\widetilde f_\bt)(\bn-\bp), 
\eeqn
which is equivalent to \eqref{r1}. 
\end{proof}

\commentout{ 
Another useful corollary is this. 
\begin{prop}\cite{FDR} Under the assumptions of Proposition \ref{prop2}, the diffraction pattern of the masked object projection $\widetilde f_\bt$ is non-vanishing  almost surely. 
\label{prop:no-zero}
\end{prop}
}
  
By Corollary \ref{cor1}, for  some $\mbm_\bt \in \IZ^2, \theta_\bt\in \IR$, we have
\beq
\label{1.70}
g_\bt(\bn)\nu(\bn)&=&e^{\im \theta_\bt} f_\bt(\bn + \mbm_\bt  )\mu(\bn + \mbm_\bt  )\\
\nn\mb{\rm or\quad Twin}(g_\bt\nu)(\bn)&=& e^{-\im \theta_\bt} f_\bt(\bn+\mbm_\bt  ) \mu(\bn +\mbm_\bt  ).
\eeq
Since $\mbox{\rm Twin}(g_\bt)(\bn)=\bar g_\bt(-\bn)$,  we rewrite \eqref{1.70} as
\beq
\label{1.7}
g_\bt(\bn)\nu(\bn)&=&\lt\{\begin{matrix}
e^{\im \theta_\bt} f_\bt(\bn + \mbm_\bt  )\mu(\bn + \mbm_\bt ) \\
 e^{\im \theta_\bt} \bar f_\bt(-\bn +\mbm_\bt  )\bar\mu(-\bn +\mbm_\bt  )
 \end{matrix}\rt.
\eeq
\commentout{
which translates into 
\beq
\widehat g_{x(\alpha,\beta)}(j,k)&=& \widehat f(-\alpha j-\beta k, j,k),\quad j,k\in \IZ_{2n-1}.
\eeq
Likewise, we have
\beq
\widehat g_{y(\alpha,\beta)}(i,k)&=& \widehat f(i, -\alpha i-\beta k, k), \quad i,k\in \IZ_{2n-1}\\
\widehat g_{z(\alpha,\beta)}(i,j)&=& \widehat f(i,j,  -\alpha i-\beta j), \quad i,j\in \IZ_{2n-1}. 
\eeq
for  some $\mbm_\bt  \in \IZ,\theta_\bt\in \IR$. 
}

If $\mu$ is completely known, i.e. $\nu=\mu$, then \eqref{1.7} becomes 
\beq
\label{1.8}
g_\bt(\bn)\mu(\bn)&=&\lt\{\begin{matrix}
e^{\im \theta_\bt} f_\bt(\bn + \mbm_\bt  )\mu(\bn + \mbm_\bt ) \\
 e^{\im \theta_\bt} \bar f_\bt(-\bn +\mbm_\bt  )\bar\mu(-\bn +\mbm_\bt  ).
 \end{matrix}\rt.
\eeq

Our goal is to  prove that with a sufficiently large $\cT$,  \eqref{1.7} yields  $g=f$ and $\mu=\nu$, up to a constant phase factor,  almost surely, i.e. $\mbm_\bt=0$ and $\theta_\bt=\mbox{const.}$ for all $\bt$ and eventually design an efficient algorithm to reconstruct $f$.

\section{Uniqueness theorems}\label{sec:unique}

Our first main result is that with the help of a random mask,  tomographic phase retrieval reduces to computed tomography modulo the ambiguity that the object projection  is independent of the direction used in the measurement scheme. 

\commentout{
Let $\mb{\rm Box}[E]$ stand for the box hull,  the smallest
rectangle,  with edges parallel to $\be_1=(1,0)$ or $\be_2=(0,1)$, containing $E$.
 In our present setting an object projection  $f_\bt$ is said to have a tight support if 
\beq
\label{tight}
\mb{\rm Box}[\supp(f_\bt)]=\IZ_n^2
\eeq 
otherwise $f_\bt$ is said  to have  a loose support. Clearly, $f_\bt$ has a tight support if and only if $\mb{\rm Twin}(f_\bt)$ does  since  $\mb{\rm Box}[\supp(f_\bt)]=\mb{\rm Box}[\supp(\mb{\rm Twin}(f_\bt))]+\mbm$ for some $\mbm$.
}
\begin{thm}[Reduction to CT modulo an ambiguity]  \label{tom} Consider a random phase mask $\mu(\bn)=\exp[\im \phi(\bn)]$ with independent, continuous random variables $\phi(\bn)\in \IR$. 
 Suppose that $f_\bt$ is a non-line object  for all $\bt\in \cT$.
If $g$ is supported in $\IZ_n^3$ and produces  the same diffraction patterns as $f$ for all $\bt\in \cT$,  then 
with probability one either
\beq
g_\bt=e^{\im\theta_0} f_\bt,\quad \forall\bt\in \cT
\label{2.40}
\eeq  
or 
\beq\label{2.35}
{g}_{\mathbf{t}}&=&{g}_{\mathbf{t}'}, \quad \forall\bt,\bt'\in \cT,
\eeq
(including the special case ${f}_{\bt}={f}_{\bt'}, \forall\bt,\bt'\in \cT$). 
\label{lem1}
\end{thm}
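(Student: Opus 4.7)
The plan is to combine a per-direction ambiguity analysis via Corollary \ref{cor1} with the 3D consistency imposed by the Fourier slice theorem, exploiting the probabilistic independence of the random mask phases. First I would apply Proposition \ref{prop2} and Corollary \ref{cor1} direction by direction: for each $\bt \in \cT$, since $f_\bt$ is a non-line object, Proposition \ref{prop2} ensures that, almost surely over $\mu$, the $z$-transform of the masked projection $\widetilde f_\bt = \mu f_\bt$ has no non-monomial irreducible factor; Corollary \ref{cor1} with $\nu = \mu$ then produces \eqref{1.8}, giving a shift $\mbm_\bt \in \IZ^2$, a phase $\theta_\bt \in \IR$, and a choice of translation (T) or twin (Tw) case.

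The next step is to dichotomize on whether every direction sits in the T-case with trivial shift. If so, then \eqref{1.8} collapses to $g_\bt = e^{\im \theta_\bt} f_\bt$ for every $\bt$, and the Fourier slice theorem (Theorem \ref{thm:slice}) lifts this to $\widehat g|_{\Pi_\bt} = e^{\im \theta_\bt} \widehat f|_{\Pi_\bt}$, where $\Pi_\bt$ is the 2D hyperplane through the origin associated to direction $\bt$. Any two such slices meet in a line through the origin containing the point where $\widehat f(\mathbf{0}) = \sum_{\bn} f(\bn)$ is generically nonzero, forcing $e^{\im \theta_\bt} = e^{\im \theta_{\bt'}}$; iterating over pairs gives a common phase $\theta_0$, which is exactly \eqref{2.40}. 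In the complementary scenario, some direction has $\mbm_\bt \neq \mathbf{0}$ in the T-case or lies in the Tw-case, and \eqref{1.8} yields respectively
\[
g_\bt(\bn) = e^{\im \theta_\bt} f_\bt(\bn + \mbm_\bt)\, \mu(\bn + \mbm_\bt)\mu^{-1}(\bn)
\quad\text{or}\quad
g_\bt(\bn) = e^{\im \theta_\bt} \bar f_\bt(-\bn + \mbm_\bt)\, \bar\mu(-\bn + \mbm_\bt)\mu^{-1}(\bn).
\]
In each case the extra prefactor involving $\mu$ is a non-constant random function whose values are independent across $\bn$ and uniformly distributed in phase. I would argue that the rigidity of requiring all the projections $g_\bt$ to come from a single 3D object $g \subset \IZ_n^3$, combined with this randomness, can be reconciled almost surely only when $g_\bt$ does not genuinely depend on $\bt$, which is exactly \eqref{2.35}.

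The main technical obstacle is this last step: rigorously converting the random-mask prefactors into the direction-independence conclusion $g_\bt = g_{\bt'}$. A careful argument is needed that tracks how the random phases $\phi(\bn)$ propagate through projections taken in different directions and shows that, for all but a probability-zero set of realizations of $\mu$, only a direction-independent $g_\bt$ can reconcile the probabilistic independence of $\mu$ with the 3D coherence required of $g$.
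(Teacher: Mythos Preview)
Your overall architecture---apply Corollary~\ref{cor1} per direction to get \eqref{1.8}, then use the Fourier slice theorem plus the randomness of $\mu$ to synchronize across directions---matches the paper, and your treatment of the ``all translation with $\mbm_\bt=0$'' case is essentially the paper's argument for \eqref{2.40}.

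The genuine gap is in your complementary case. You invoke a vague ``3D rigidity'' of $g$, but the concrete mechanism is the \emph{same} one you used in the first case: the common line $P_{\bt_1}\cap P_{\bt_2}$ from the Fourier slice theorem, now combined with term-by-term comparison of the random phases. The paper's route is more structured than your dichotomy and goes as follows. First, one shows that if \emph{one} direction $\bt_0$ is in the translation alternative, then \emph{every} direction must be: writing out $\widehat g_{\bt_0}=\widehat g_\bt$ on the common line produces, on one side, terms carrying $e^{+\im\phi(\bn)}$ and, on the other (twin) side, terms carrying $e^{-\im\phi(\bn)}$; by independence and continuity of the $\phi(\bn)$ these cannot match almost surely unless the twin alternative is vacuous. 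Second, with all directions in the translation alternative, comparing two directions on their common line forces $\mbm_{\bt_1}=\mbm_{\bt_2}=:\mbm_0$ (otherwise the factors $e^{-\im\phi(\bn-\mbm_{\bt_1})}$ and $e^{-\im\phi(\bn-\mbm_{\bt_2})}$ are independent and cannot balance). Only \emph{then} does the paper dichotomize: if $\mbm_0=0$ you are back in your first case; if $\mbm_0\neq 0$, the common prefactor $\lambda_0(\bn)=\mu(\bn)/\mu(\bn-\mbm_0)$ has independent continuous phases, so the common-line identity
\[
e^{\im\theta_{\bt_1}}\sum_\bn \lambda_0(\bn)f_{\bt_1}(\bn)e^{-\im 2\pi\bn\cdot\bk/p}
= e^{\im\theta_{\bt_2}}\sum_\bn \lambda_0(\bn)f_{\bt_2}(\bn)e^{-\im 2\pi\bn\cdot\bk/p}
\]
forces $e^{\im\theta_{\bt_1}}f_{\bt_1}(\bn)=e^{\im\theta_{\bt_2}}f_{\bt_2}(\bn)$ for every $\bn$, and hence $g_{\bt_1}=g_{\bt_2}$, which is \eqref{2.35}. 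The all-twin case is then handled by the same pattern.

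In short: your second case needs the common-line argument again, and it needs the two intermediate steps (ruling out mixed T/Tw, then equalizing the shifts) before the $\mbm_0=0$ versus $\mbm_0\neq 0$ split can deliver the dichotomy \eqref{2.40}/\eqref{2.35}.
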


\commentout{
\begin{rmk}
The $n+1$ directions
$(1,s_i, 0), i=1,\ldots, n$ and $(0,1,0)$ correspond to rotations about the $z$-axis; the $n+1$ directions
$(s_i,0,1), i=1,\ldots, n$ and $(1,0,0)$ correspond to rotations about the $y$-axis; 
the $n+1$ directions
$(0, 1, s_i), i=1,\ldots, n$ and $(0,0,1)$ correspond to rotations about the $x$-axis.  

\end{rmk}
}

\commentout{
\begin{thm} Suppose that $f_\bt$ has a tight support for every $\bt\in \cT$. Let $S$ be a set of slopes with $n$ distinct values, $s_i\in (-1,1), i=1,\ldots, n.$ If $g$ produces  the same $3n$ diffraction patterns in the directions
$(1,s_i, 0)$ (\hbox{or} $(1,0,s_i)$),  $(0,1,s_i)$ (\hbox{or}  $(s_i, 1,0)$) and $(s_i, 0,1)$ (\hbox{or} $(0,s_i,1)$) for $i=1,\ldots, n$,  then either $g=e^{\im\theta_0}\overline{f}$,  in which case both $f$ and $g$ are a $\delta$-function located at the origin, or $g=e^{\im \theta_0} f$.  
Here $\theta_0$ is an arbitrary constant in $\IR$.  
\label{tom}
\end{thm}

\begin{rmk}
The $n+1$ directions
$(1,s_i, 0), i=1,\ldots, n$ and $(0,1,0)$ correspond to rotations about the $z$-axis; the $n+1$ directions
$(s_i,0,1), i=1,\ldots, n$ and $(1,0,0)$ correspond to rotations about the $y$-axis; 
the $n+1$ directions
$(0, 1, s_i), i=1,\ldots, n$ and $(0,0,1)$ correspond to rotations about the $x$-axis.

\end{rmk}
}
\begin{rmk}
 If $f$ is a non-planar object than it follows that $f_\bt$ is a non-line object  for all $\bt$.  

\end{rmk}

\begin{rmk}\label{rmk4.3}
With a plain (instead of random) mask, the twin-object ambiguity $g(\bn)=e^{\im\theta_0}\overline{f(-\bn)}$ can not be eliminated. 

\end{rmk}

\commentout{
\begin{rmk}
One can argue that $n+1$ is very close to the minimum number of diffraction patterns necessary for tomographic phase retrieval as follows. 

Suppose each diffraction pattern is sufficient for unique determination of the corresponding projection of the object (which is false unless  provided with additional prior information, see \cite{unique}), which accounts for $n\times n$ degrees of freedom. Since there are $n^3$ degrees of freedom in $f$, one needs at least $n$ distinct diffraction patterns for unique determination of $f$. 

\end{rmk}

\begin{rmk}

In contrast, for the continuum setting, 
a compactly supported function is uniquely determined by  the Fourier transform (magnitude \& phase) in any infinite set of projections (\cite{Helgason}, Proposition 7.8) while
for any finite set of projections, counterexamples to unique determination can be constructed (\cite{Helgason}, Proposition 7.9).

As a consequence, uniqueness with Fourier magnitude data in the continuum setting  would certainly require an infinite number of projections. It is not currently known, however, that which infinite sets of projections guarantee uniqueness with Fourier magnitude measurements. 

\end{rmk}

\begin{rmk}
Spherically symmetric objects are automatically precluded from our discrete object model.  To create this ambiguity, one has to go to the {\bf continuum} setting and consider the model of 3D objects with finite support \cite{continuum}. 

Note also that due to zero padding, no object can give rise to a $\delta$-function on any Fourier plane. 
\end{rmk}
}

\begin{proof}

\commentout{
Suppose that there is an element $\bt_0\in \cT$ such that 
\beq\label{1.10}
g_{\bt_0}(\bn)&=&
e^{\im \theta_{\bt_0}} f_{\bt_0}(\bn + \mbm_{\bt_0}  )\mu(\bn + \mbm_{\bt_0} )/\mu(\bn),
\eeq
one of the two alternatives stipulated in \eqref{1.8}. Note that \eqref{1.10} along with the 3D support constraint \eqref{1.1} imply that the support of the right hand side of \eqref{1.10} is a subset of $\IZ^2_n$. 

Let $\widetilde f$ be the 3D object defined as
$f$  multiplied by 
$\mu(\cdot)/\mu(\cdot-\mbm_{\bt_0})
$ along the projection lines in the direction $\bt_0$. Hence 
\beq
\widetilde f_{\bt_0}(\bn+\mbm_{\bt_0})= f_{\bt_0}(\bn + \mbm_{\bt_0}  )\mu(\bn + \mbm_{\bt_0} )/\mu(\bn),
\eeq
and $g_{\bt_0}(\bn)=e^{\im \theta_{\bt_0}} \widetilde f_{\bt_0}(\bn+\mbm_{\bt_0})$.

For  other directions $\bt\in \cT$, we can rewrite \eqref{1.8} as 
\beq
\label{1.10}
g_\bt(\bn)&=&\lt\{\begin{matrix}
e^{\im \theta_\bt} \widetilde f_\bt(\bn+\mbm_{\bt} )\\
 e^{\im \theta_\bt} \bar f_\bt(-\bn +\mbm_\bt)\bar\mu(-\bn+\mbm_\bt )/\mu(\bn).
 \end{matrix}\rt.
\eeq
}

Suppose that,  for some $\bt_0\in \cT$, the first alternative in \eqref{1.8} holds true,  i.e.
\beq\label{1.10}
g_{\bt_0}(\bn)&=&
e^{\im \theta_{\bt_0}} f_{\bt_0}(\bn+\mbm_{\bt_0})\lamb_{\bt_0}(\bn+\mbm_{\bt_0})
\eeq
with 
\[\lamb_{\bt_0}(\bn)=\mu(\bn)/\mu(\bn-\mbm_{t_0}),
\]
implying
\[
\widehat{g_{\bt_0}}=e^{\im \theta_{\bt_0}}e^{\im 2\pi \mbm_{\bt_0} \cdot\bk/p} \widehat f_{\bt_0}\star \widehat\lamb_{\bt_0}(\bk).
\]

We now prove that the second alternative in \eqref{1.8} can not hold. Otherwise, suppose that for some $\bt\in \cT$,
\beq
\label{1.27}
g_\bt(\bn)&= & e^{\im \theta_\bt} \overline{ f_\bt(-\bn +\mbm_\bt)\nu_\bt(-\bn+\mbm_\bt)}
\eeq
with
\[
\nu_\bt(\bn)={ \mu(\bn)}/\overline{\mu(-\bn+\mbm_{t})}. 
\]
implying  
\[
\widehat{g_\bt}(\bk)
=\overline{\widehat f_\bt\star \widehat \nu_\bt(\bk )} e^{-\im 2\pi \mbm_\bt \cdot\bk/p}
\]
where $\star$ denotes the discrete convolution over the periodic grid $\IZ_p^2$.

Let $P_\bt$ denote the origin-containing (continuous) plane orthogonal to $\bt$ in the Fourier space. 
By Fourier slice theorem, for all $\bk\in P_\bt \cap P_{\bt_0}$, $\widehat g_{\bt_0}(\bk)=\widehat g_\bt(\bk)$ and hence
\[
e^{\im \theta_{\bt_0}}e^{\im 2\pi \mbm_{\bt_0} \cdot\bk/p} \widehat f_{\bt_0}\star \widehat\lamb_{\bt_0}(\bk)= e^{\im \theta_\bt} e^{-\im 2\pi \mbm_\bt \cdot\bk/p}\overline{\widehat f_\bt\star \widehat \nu_\bt }(\bk),\quad\forall\bk\in P_\bt \cap P_{\bt_0}
\]
implying 
\beq\label{1.30}
&&e^{\im \theta_{\bt_0}}e^{\im 2\pi \mbm_{\bt_0} \cdot\bk/p}\sum_{\bn\in \IZ^2_n} e^{\im \phi(\bn)}
e^{-\im \phi(\bn-\mbm_{\bt_0})}f_{\bt_0}(\bn)e^{-\im2\pi \bn\cdot\bk/p}\\
&= &e^{\im \theta_\bt} e^{-\im 2\pi \mbm_\bt \cdot\bk/p} \sum_{\bn\in \IZ_n^2} e^{-\im \phi(\bn)}e^{-\im\phi(-\bn+\mbm_\bt)}\bar f_\bt(\bn)e^{\im 2\pi \bn\cdot\bk/p},\quad\forall\bk\in P_\bt \cap P_{\bt_0}. \nn
\eeq

We now show that eq. \eqref{1.30} can not hold for any $\mbm_{\bt_0}, \mbm_\bt$. Consider any $\bn$  that $f_{\bt_0}(\bn)\neq 0$. Due to the statistical independence of $\phi(\cdot)$ and the sign of the phases in 
\[
e^{\im \phi(\bn)}
e^{-\im \phi(\bn-\mbm_{\bt_0})}f_{\bt_0}(\bn)e^{-\im2\pi \bn\cdot\bk/p},\quad 
e^{-\im \phi(\bn)}e^{-\im\phi(-\bn+\mbm_\bt)}\bar f_{\bt}(\bn)e^{\im 2\pi \bn\cdot\bk/p}
\]
appearing in the summation on either side of \eqref{1.30},
the phase factor $e^{\im \phi(\bn)}$ on the left can not be balanced without setting $\mbm_{\bt_0}=0$. This then implies
\[e^{\im \theta_{\bt_0}}f_{\bt_0}(\bn)e^{-\im2\pi \bn\cdot\bk/p}=e^{\im \theta_\bt} e^{-\im 2\pi \mbm_\bt \cdot\bk/p}
e^{-\im \phi(\bn)}e^{-\im\phi(-\bn+\mbm_\bt)}\bar f_\bt(\bn)e^{\im 2\pi \bn\cdot\bk/p}
\]
which cannot hold since the left hand size is deterministic while the right hand side is random. Consequently,  \eqref{1.27} is false almost surely, which leaves the first of \eqref{1.8} the only viable alternative.

If, however, $f_{\bt_0}(\bn)=0$ for all $\bn$, then the same argument implies that $f_\bt(\bn)=0$ for all $\bn$. 
Hence the first alternative of \eqref{1.8} still follows,
i.e.
\beq
\label{1.13}
g_\bt(\bn)&=&
e^{\im \theta_\bt} f_\bt(\bn+\mbm_\bt )\lamb_\bt(\bn+\mbm_\bt),\quad \forall \bt \in \cT, 
\eeq
{for some} $\mbm_\bt$.

Consider two arbitrary,  distinct directions  $\bt=\bt_1,\bt_2\in \cT$. 
By the Fourier slice theorem,   
\beq
e^{\im\theta_{\bt_1}}e^{\im 2\pi \mbm_{\bt_1} \cdot\bk}\widehat f_{\bt_1} \star\widehat\lamb_{\bt_1}(\bk)&=&
e^{\im\theta_{\bt_2}}e^{\im 2\pi \mbm_{\bt_2} \cdot\bk}\widehat f_{\bt_2} \star\widehat\lamb_{\bt_2}(\bk),\quad\forall \bk\in P_{\bt_1} \cap P_{\bt_2}\label{1.61}
\eeq
implying 
\beq\label{1.30'}
&&e^{\im \theta_{\bt_1}}e^{\im 2\pi \mbm_{\bt_1} \cdot\bk/p}\sum_{\bn\in \IZ^2_n} e^{\im \phi(\bn)}
e^{-\im \phi(\bn-\mbm_{\bt_1})}f_{\bt_1}(\bn)e^{-\im2\pi \bn\cdot\bk/p}\\
&=&\nn
e^{\im \theta_{\bt_2}}e^{\im 2\pi \mbm_{\bt_2} \cdot\bk/p}\sum_{\bn\in \IZ^2_n} e^{\im \phi(\bn)}
e^{-\im \phi(\bn-\mbm_{\bt_2})}f_{\bt_2}(\bn)e^{-\im2\pi \bn\cdot\bk/p},\quad\forall \bk\in P_{\bt_1} \cap P_{\bt_2}. 
\eeq
Due to the statistical independence of $\phi(\cdot)$, the randomness on the both sides of \eqref{1.30'} can not balance out unless
\beq
\mbm_{\bt_1}&=&\mbm_{\bt_2} (=\mbm_0) \label{2.29}\\
\lamb_{\bt_1}&=&\lamb_{\bt_2}.\label{2.30}
\eeq
Eq. \eqref{2.30} means independence of $\lambda_\bt$ from $\bt\in \cT$ and justifies the simplified notation
\beq
\label{2.33}
\lamb_\bt(\bn)=\lamb_0(\bn):=\mu(\bn)/\mu(\bn-\mbm_0)\quad\hbox{for some $\mbm_0\in \IZ^2$ and  all $\bt\in \cT$}. 
\eeq
With this, \eqref{1.30'} reduces to
\beq
e^{\im \theta_{\bt_1}}\sum_{\bn\in \IZ^2_n} e^{\im \phi(\bn)}
e^{-\im \phi(\bn-\mbm_{0})}f_{\bt_1}(\bn)e^{-\im2\pi \bn\cdot\bk/p}
&=&e^{\im \theta_{\bt_2}}\sum_{\bn\in \IZ^2_n} e^{\im \phi(\bn)}
e^{-\im \phi(\bn-\mbm_{0})}f_{\bt_2}(\bn)e^{-\im2\pi \bn\cdot\bk/p},\label{2.32}
\eeq
for all $ \bk\in P_{\bt_1} \cap P_{\bt_2}. $

The function $\lamb_0$ defined in \eqref{2.33} is either  $1$ (if $\mbm_0=0$) or random (if $\mbm_0\neq 0$). 
If $\mbm_0=0$, then, by \eqref{1.13}, 
$g_\bt=e^{\im\theta_\bt} f_\bt$ for all $\bt\in \cT$. By Fourier slice Theorem, 
\beqn
\widehat f_\bt(\bk)&=&\widehat f_{\bt_0}(\bk),\quad\forall\bk\in P_\bt \cap P_{\bt_0}\\
\widehat g_\bt(\bk)&=&\widehat g_{\bt_0}(\bk), \quad\forall\bk\in P_\bt \cap P_{\bt_0}
\eeqn
and hence $\theta_\bt=\theta_{0}$ for some $\theta_0\in \IR$ and all $\bt\in \cT$.
In other words, $
g_\bt=e^{\im\theta_0} f_\bt, \forall\bt\in \cT.$

If $\mbm_0\neq 0$, then \eqref{2.32} and  the statistical independence  of $\phi(\cdot)$ imply that
\beq\label{2.34}
e^{\im\theta_{\bt_1}} \widehat{f}_{\mathbf{t_1}}&=&e^{\im\theta_{\bt_2}}\widehat{f}_{\mathbf{t_2}},\quad\forall \bt_1,\bt_2\in \cT. 
\eeq
Hence by \eqref{1.13} and \eqref{2.29}
\beq\nn
{g}_{\mathbf{t_1}}&=&{g}_{\mathbf{t_2}}, \quad \forall\bt_1,\bt_2\in \cT,
\eeq
almost surely. In other words, $ g_{\mathbf{t}}$ is independent of $\mathbf{t}\in \cT$ with probability one.

Let us turn to  the remaining undesirable alternative:
\beq
\label{1.14}
g_\bt(\bn)\mu(\bn)&=&
 e^{\im \theta_\bt} \overline{ f_\bt(-\bn +\mbm_\bt) \mu(-\bn+\mbm_\bt)},\quad \forall \bt \in \cT
\eeq
or, equivalently, \eqref{1.27}.

For two distinct projections $\bt=\bt_1,\bt_2\in \cT$, 
\eqref{1.27} implies 
\beq\label{1.29}
e^{-\im\theta_{\bt_1}}{{\widehat f_{\bt_1}\star \widehat \nu_{\bt_1}}(\bk)}
=e^{-\im\theta_{\bt_2}}{\widehat f_{\bt_2}\star \widehat \nu_{\bt_2}(\bk)},\quad \forall \bk\in\cP_{\bt_1}\cap\cP_{\bt_2}. 
\eeq
\commentout{By the Fourier slice theorem, $
\widehat g_{\bt_1}=\widehat g_{\bt_2}$ and $\widehat f_{\bt_1}=\widehat f_{\bt_2}$  along the common line $\bt_1\times \bt_2$ and hence $\theta_{\bt_1}=\theta_{\bt_2}$, i.e. $\theta_\bt =\theta_0$ for some $\theta_0$  independent of $\bt\in \cT$. In other words,  \eqref{1.14} becomes 
\beq
\label{1.28}
g_\bt(\bn)\mu(\bn)&=&
 e^{\im \theta_0} \overline{ f_\bt(-\bn ) \mu(-\bn)},\quad \forall \bt \in \cT.
\eeq
\eqref{1.27} implies 
\beq\label{1.29}
e^{\im\theta_{\bt_1}}e^{\im 2\pi \mbm_{\bt_1}\cdot\bk}\overline{{\widehat f_{\bt_1}\star \widehat \mu}(\bk)}
=e^{\im\theta_{\bt_2}}e^{\im 2\pi \mbm_{\bt_2}\cdot\bk}\overline{\widehat f_{\bt_2}\star \widehat \mu(\bk)},\quad \forall \bk\in \bt_1\times \bt_2. 
\eeq
On the other hand, 
by \eqref{1.60}-\eqref{1.61}
 \beq\label{1.29'}
e^{\im\theta_{\bt_1}}e^{\im 2\pi \mbm_{\bt_1}\cdot\bk}=e^{\im\theta_{\bt_2}}e^{\im 2\pi \mbm_{\bt_2}\cdot\bk},\quad \forall \bk\in \bt_1\times \bt_2
\eeq
implying
\beq\label{1.37}
(\mbm_{\bt_1}-\mbm_{\bt_2})\perp (\bt_1\times\bt_2)\quad\hbox{and}\quad \theta_{\bt_1}=\theta_{\bt_2},\quad\forall\bt\in \cT. 
\eeq
}
which, at $\bk=0$, means
\beqn
e^{-\im\theta_{\bt_1}} \sum_{j,k\in \IZ_{n}}\widehat{f}_{\mathbf{t_1}}(j,k)\widehat{\nu_{\bt_1}}(-j,-k)&=&e^{-\im\theta_{\bt_1}}\sum_{j,k\in \IZ_{n}}\widehat{f}_{\mathbf{t_2}}(j,k)\widehat{\nu_{\bt_2}}(-j,-k).
\eeqn
The rest of the argument follows exactly the same pattern as that following \eqref{2.32}.

\end{proof}

In view of Theorem \ref{lem1}, with a randomly coded aperture, the uniqueness problem of phase retrieval is only slightly more difficult than that of computed tomography, with only the additional ambiguity \eqref{2.35} to resolve. 

First  let us digress and consider some generic schemes that guarantee uniqueness for computed tomography. 

 \begin{ex}\label{ex1}
{\rm  Let  $\cT$ consist of the projections represented as \eqref{2.10}:
\beq
\cT&=&\{(\alpha_l,\beta_l, 1): l=1,\dots, m\},\label{50}
\eeq
for some $m\in \IZ$ and suppose \eqref{2.40} holds, i.e.
\beqn
g_\bt=e^{\im\theta_0} f_\bt,\quad\forall\bt\in \cT.
\eeqn
 By the Fourier slice theorem, we have
\beq
\label{1.14'}
\widehat{g}(j,k, -\alpha_l j-\beta_l k) &= &e^{\im \theta_0}\widehat{f}(j,k,-\alpha_l j-\beta_l k),\quad l=1,\dots, m,
\eeq
where both $\widehat g(j,k,\cdot)$ and $\widehat f(j,k,\cdot)$ are $p$-periodic signals bandlimited to $\pi (n-1)/p$ (for odd integer $n$, cf. \eqref{1.2}). 
In order to conclude that $\widehat g=e^{\im \theta_0}\widehat f$, it suffices to have  
\beq
\label{strongCT}
|\{\alpha_l j+\beta_l k \,\,\hbox{(mod $p$)}: l=1,\dots, m\}|  \ge n,\quad \forall (j,k)\neq (0,0), 
\eeq
which is also a necessary condition for the validity of $\widehat g=e^{\im \theta_0}\widehat f$,  in general (see \cite{Eldar}).
}
\end{ex}

{Slightly modifying the observation  in Example \ref{ex1}, we can state the  following uniqueness theorem for 3D discrete computed tomography. 
\begin{thm}[\rm Uniqueness of CT]\label{thm:CT}
 Let  $\cT$ be {\bf any} one of the following three sets of projections:
\beqn
(x)&&\{(1, \alpha_l,\beta_l): l=1,\dots,m \}\\
(y)&&\{(\alpha_l,1, \beta_l): l=1,\dots,m \}\\
(z)&&\{(\alpha_l,\beta_l, 1): l=1,\dots,m \}.
\eeqn
Then
$g=e^{\im \theta_0} f$,  whenever $g_\bt =e^{\im \theta_0} f_\bt$ for all $\bt\in \cT$ and some constant $\theta_0\in \IR$,  if and only if 
the condition \eqref{strongCT} holds true.

\end{thm}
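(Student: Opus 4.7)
The plan is to reduce everything to one-dimensional bandlimited sampling by invoking the Fourier slice theorem (Theorem~\ref{thm:slice}), formalizing the computation already sketched in Example~\ref{ex1}, and to close the unavoidable gap at $(\xi,\eta)=(0,0)$ using the spatial support constraint. Let $h:=g-e^{\im\theta_0}f$; since both $f$ and $g$ are supported in $\IZ_n^3$, the conclusion $g=e^{\im\theta_0}f$ is equivalent to $h\equiv 0$. I will treat case $(z)$ only, the cases $(x)$ and $(y)$ being identical up to permutation of coordinate labels.

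For sufficiency, applying Theorem~\ref{thm:slice} to $h_\bt\equiv 0$ for each $\bt=(\alpha_l,\beta_l,1)\in\cT$ yields
\[
\widehat h(\xi,\eta,-\alpha_l\xi-\beta_l\eta)=0,\qquad (\xi,\eta)\in\IZ_p^2,\ l=1,\dots,m.
\]
For each fixed $(\xi,\eta)$, the fiber $\zeta\mapsto\widehat h(\xi,\eta,\zeta)$ is a $p$-periodic sequence whose inverse $1$D DFT is supported in $\IZ_n$, so it lies in an $n$-dimensional subspace of $\IC^{\IZ_p}$. At any $(\xi,\eta)\neq(0,0)$ condition \eqref{strongCT} supplies at least $n$ distinct vanishing points $\zeta=-\alpha_l\xi-\beta_l\eta\bmod p$, and the usual Vandermonde rigidity forces $\widehat h(\xi,\eta,\cdot)\equiv 0$. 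The slice at $(0,0)$ is pinned only at $\zeta=0$; I would close this gap by observing that, with $\widehat h$ now supported on $\{(0,0)\}\times\IZ_p$, the inverse DFT gives $h(i,j,k)=\Psi(k)$ independent of $(i,j)$ on all of $\IZ_p^3$. Since $p=2n-1>n$ for $n\geq 2$, there exists $i\in\IZ_p\setminus\IZ_n$ at which the support constraint forces $\Psi(k)=h(i,0,k)=0$ for every $k$, hence $h\equiv 0$.

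For necessity I plan the contrapositive: if \eqref{strongCT} fails at some $(j_0,k_0)\neq(0,0)$, then the sampling map from the $n$-dimensional space of $\IZ_n$-bandlimited fibers to the evaluations on $S(j_0,k_0)=\{-\alpha_l j_0-\beta_l k_0\bmod p:l=1,\dots,m\}$ has nontrivial kernel; any nonzero element $\tau$ of this kernel is the seed of a counterexample $h$ supported in $\IZ_n^3$ whose $3$D Fourier transform vanishes on the union of projection planes, along the lines of the sampling-theoretic construction in \cite{Eldar}. I expect the main obstacle to lie precisely here: prescribing $\widehat h(j_0,k_0,\cdot)=\tau$ alone does not automatically produce an $h$ supported in $\IZ_n^3$, because the support constraint couples all Fourier fibers nontrivially; I plan to address this via a dimension count against the kernel of the joint projection-sampling operator, following the tight construction in \cite{Eldar}.
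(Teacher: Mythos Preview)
Your approach is essentially the paper's own: the paper gives no separate proof of Theorem~\ref{thm:CT} beyond pointing to Example~\ref{ex1}, which does exactly what you do for sufficiency---invoke the Fourier slice theorem to reduce to a one-dimensional nonuniform sampling problem on each fiber $(\xi,\eta)\neq(0,0)$, then appeal to the Vandermonde/bandlimited rigidity guaranteed by \eqref{strongCT}. Your explicit closure of the $(\xi,\eta)=(0,0)$ fiber via the support constraint ($h$ constant in $(i,j)$ on $\IZ_p^3$, hence zero because $p>n$) is a genuine detail the paper omits; it is correct and necessary.

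For necessity the paper likewise only gestures: Example~\ref{ex1} says \eqref{strongCT} is ``also a necessary condition \ldots\ in general (see \cite{Eldar}),'' without carrying out the construction. Your instinct that the obstacle is real is sound. A fiberwise kernel element $\tau$ at $(j_0,k_0)$ does \emph{not} lift to an $h$ supported in $\IZ_n^3$ by setting $\widehat h(j_0,k_0,\cdot)=\tau$ and the other fibers to zero, precisely because the Dirichlet kernel $D_n$ has no integer zeros when $\gcd(n,p)=1$; the support constraint then forces $h\equiv 0$. In fact the very support argument you used to close the $(0,0)$ gap in sufficiency can, for certain $\cT$ failing \eqref{strongCT}, still force uniqueness---so the ``only if'' direction really does require the reference \cite{Eldar} (or some genericity reading of ``in general'') rather than a bare dimension count. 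Your plan to follow \cite{Eldar} is therefore exactly what the paper does; just be aware that the dimension count you sketch will not by itself suffice.
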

\begin{rmk}\label{rmk:CT}
The condition \eqref{strongCT} can be achieved with overwhelming probability by randomly and independently selecting $n$ pairs of $(\alpha_l,\beta_l)$ (i.e. $m=n$) with the uniform distribution over the square $|\alpha_l|, |\beta_l| <1,$\cite{rand}. 


In view of the Fourier slice theorem, the redundancy in 3D discrete CT due to the overlap of Fourier planes with different normal vectors (i.e. the common lines) can be roughly estimated as follows. Every pair of Fourier planes share a common line of  about $n$ degrees of freedom. There are in general $n(n-1)/2$ pairs from $n$ distinct Fourier planes and hence $n^2(n-1)/2$ degrees of information overlap.   As oversampling the projection planes (cf. \eqref{nyquist1} \& \eqref{nyquist2}) compensates  the information overlap, $n$ generic  projections  contain sufficient information for determining the $n^3$  degrees of freedom in the object.   

\end{rmk}
}

In  X-ray diffractive imaging, a most commonly used scheme is rotated projections about an axis orthogonal to the directions of projection. For example,  \beqn
&&\{(\alpha_l,0,  1): l=1,\ldots, m\},\label{T1}
\eeqn
where $\alpha_l$ are distinct numbers, 
represents a sequence of projections rotated about the $y$-axis. 
More generally, rotated projections forming the same angle $\arctan(\gamma)$ with, say,  the $z$-axis, can be represented as 
\beqn
&&\{(\gamma\cos t_j, \gamma\sin t_j, 1): j=1,\dots,m \}. \label{scheme1}
\eeqn

Next,  we demonstrate that with one additional projection to the scheme such as  in Example \ref{ex1}, one can eliminate the possibility \eqref{2.35} and resolve the uniqueness problem for tomographic phase retrieval. 

\begin{ex} [\rm Resolution of ambiguity \eqref{2.35}] \label{Yu} {\rm  Let  $\cT$ consist of the projections represented as \eqref{2.8}:
\beqn
\cT&=&\{(1, \alpha_l,\beta_l): l=1,\dots, n\}\cup \{(0,\alpha_0,\beta_0)\}\label{T2}
\eeqn
satisfying \eqref{strongCT} and $(\alpha_0,\beta_0)\neq (0,0).$

In terms of  the X-ray transform, \eqref{2.35} means that, for some $c(\cdot,\cdot)$ independent of $\alpha,\beta,$
\beq\label{3.33}
\widehat{g}_{x(\alpha, \beta)}(j,k) &= &c(j,k) 
\eeq
and hence by Fourier Slice Theorem
\beq
\label{3.45}
\widehat g(-\alpha j-\beta k, j, k)&=&c(j,k)
\eeq
 for  $ j,k\in \IZ_{p}$.

Let
\beq
\label{xg}
\widehat{g}(\xi,\eta,\zeta)= \sum_{m}\widehat{g}_{\eta\zeta}(m) e^{-2\pi \im m\xi /{p}}
\eeq
with
\beq
\widehat{g}_{\eta\zeta}(m)&=&\sum_{l} \widehat g_\eta (m,l) e^{-2\pi \im l\zeta/{p}}\label{3.36} 
\eeq
and 
\beq
\label{3.35}
\widehat{g}_{\eta}(m,l)&=&\sum_{k} g(m,k,l) e^{-2\pi \im k\eta/{p}}.
\eeq

By the support constraint $\supp(g)\in \IZ_n^3$, \eqref{xg} becomes the $n\times n$ Vandermonde  system 
\beq
\label{van'}
V \widehat g_{\eta\zeta}=\lt[\begin{matrix} c(\eta, \zeta)\\ 
c(\eta, \zeta)\\ 
\vdots\\
c(\eta,\zeta)
\end{matrix}
\rt].
\eeq
with
\beq
V&=&[V_{ij}],\quad V_{ij}= e^{-2\pi \im \xi_ij/{p}},\quad \xi_i=-\alpha_i\eta-\beta_i\zeta.
\eeq
which is nonsingular if and only if $\{\xi_i: i=1, \dots,n\}$ has $n$ distinct members. 

Since the system \eqref{van'} has a unique solution for $(\eta,\zeta)\neq (0,0)$,  we identify 
$\widehat g_{\eta\zeta}(\cdot)$ to be the discrete $\delta$-function located at $0$ with amplitude $c(\eta,\zeta)$ for $(\eta,\zeta)\neq (0,0)$.

For $\eta,m\neq 0$, $\widehat{g}_{\eta\zeta}(m)=0$ for all $\zeta$ and hence $\widehat g_{\eta}(m,l)=0 $ for all $l$. Likewise for \eqref{3.35}, we select $n$ distinct, nonzero values for $\eta$ to perform inversion of the Vandermonde system and obtain   
\beq
g(m, k, l)=0, \quad m\neq 0. \label{1.41}
\eeq
 In other words, $g$ is supported on the $y-z$ plane. 
Consequently  the projection in the direction of $(0,\alpha_0,\beta_0)$ of $g$ would be a line object,  contradicting to the assumption of non-line projection in Theorem \ref{lem1}. Therefore, \eqref{2.35} is false and \eqref{2.40} holds true almost surely for  the scheme $\cT$ under the assumptions of Theorem \ref{tom}. 
 }
 \end{ex}
 
Slightly  extending the above  analysis, we are ready to state the final result. 
\begin{thm}\label{tom2}
 Let  $\cT$ be {\bf any} one of the following three sets of projections:
\beqn
(x')&&\{(1, \alpha_l,\beta_l): l=1,\dots,n \}\cup \{(0, \alpha_0,\beta_0)\}\\
(y')&&\{(\alpha_l,1, \beta_l): l=1,\dots,n \}\cup \{(\alpha_0,0, \beta_0)\}\\
(z')&&\{(\alpha_l,\beta_l, 1): l=1,\dots,n \}\cup \{(\alpha_0,\beta_0,0)\}
\eeqn
satisfying condition \eqref{strongCT} (with $m=n$) and $(\alpha_0,\beta_0)\neq (0,0)$. 
\commentout{
with the property that 
\beq
\label{100}
&&(\alpha_0,\beta_0)\neq (0,0)\quad \& \\
&&\{\alpha_l \xi+\beta_l\eta: l=1,\dots, n\} \hbox{ has $n$ distinct members for each fixed $(\xi,\eta)\neq (0,0)$}. \label{101}
\eeq
}
Then under the assumptions of Theorem \ref{tom}, we have 
$g=e^{\im \theta_0} f$,  for some constant $\theta_0\in \IR$,  with probability one.

\end{thm}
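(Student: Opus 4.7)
The plan is to apply Theorem \ref{tom} to collapse the problem into two alternatives and dispose of each. By symmetry under permutation of the coordinate axes, it suffices to handle case $(x')$; cases $(y')$ and $(z')$ follow by relabeling.

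First I would invoke Theorem \ref{tom} to conclude that, with probability one, either (A) $g_\bt = e^{\im\theta_0} f_\bt$ for every $\bt\in \cT$, or (B) $g_\bt = g_{\bt'}$ for all $\bt, \bt' \in \cT$. In alternative (A), the subset $\{(1,\alpha_l,\beta_l): l=1,\ldots,n\} \subset \cT$ already meets the distinctness condition \eqref{strongCT}, so Theorem \ref{thm:CT} immediately delivers $g = e^{\im \theta_0} f$ on the ambient grid and the proof is complete.

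The real work is excluding alternative (B), where I would mimic the Vandermonde argument of Example \ref{Yu}. From (B) applied to the $n$ $x$-type projections, the Fourier slice theorem \eqref{proj-x} yields $\widehat g(-\alpha_l \eta - \beta_l \zeta,\eta,\zeta) = c(\eta,\zeta)$ for $l=1,\ldots,n$ and every $(\eta,\zeta)$. For fixed $(\eta,\zeta)\neq (0,0)$, condition \eqref{strongCT} gives $n$ distinct nodes $\xi_l = -\alpha_l \eta - \beta_l\zeta$, making the $n\times n$ Vandermonde system \eqref{van'} invertible with unique solution forcing the 1D partial Fourier transform of $g$ in $x$ to be concentrated at the zero mode. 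A second Fourier inversion in the remaining variables then propagates the vanishing to obtain $g(m,k,l) = 0$ for every $m\neq 0$, so that $g$ is supported on the coordinate plane $\{x = 0\}$.

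The closing step is to show that this support structure is incompatible with the extra projection $(0,\alpha_0,\beta_0)$. Since the direction $(0,\alpha_0,\beta_0)\neq (0,0,0)$ lies \emph{within} the plane $\{x=0\}$ on which $g$ is supported, a direct computation using the Dirichlet kernel interpolation \eqref{1.5}-\eqref{1.5''} shows that $g_{(0,\alpha_0,\beta_0)}$ vanishes on all integer intercepts off a single line in the 2D projection lattice, i.e., it is a line object. On the other hand, $g_{(0,\alpha_0,\beta_0)}$ shares the coded diffraction pattern of $f_{(0,\alpha_0,\beta_0)}$, so Proposition \ref{prop2} and Corollary \ref{cor1}, applicable because $f_{(0,\alpha_0,\beta_0)}$ is non-line by the hypothesis inherited from Theorem \ref{tom}, identify $g_{(0,\alpha_0,\beta_0)}$ as a shift or twin of $f_{(0,\alpha_0,\beta_0)}$ and hence non-line. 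This contradiction rules out (B) almost surely, completing the proof. The main technical obstacle I anticipate is the bookkeeping in the Vandermonde/Fourier inversion, where the ranges of frequency variables ($\IZ_p$ with $p\ge 2n-1$ versus the support $\IZ_n$) must be tracked carefully in order to legitimately pass from vanishing of Fourier modes to the support conclusion $g(m,k,l)=0$ for $m\neq 0$.
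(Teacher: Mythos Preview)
Your proposal is correct and follows essentially the same route as the paper: invoke Theorem \ref{tom} to reduce to the two alternatives, dispatch alternative (A) with Theorem \ref{thm:CT} via condition \eqref{strongCT}, and rule out alternative (B) by the Vandermonde argument of Example \ref{Yu}, which forces $g$ onto the plane $\{x=0\}$ and hence makes the extra projection $g_{(0,\alpha_0,\beta_0)}$ a line object. Your closing step, where you derive the contradiction by pulling the non-line property of $f_{(0,\alpha_0,\beta_0)}$ across to $g_{(0,\alpha_0,\beta_0)}$ through Corollary \ref{cor1}, is in fact more explicit than the paper, which simply asserts a contradiction with ``the assumption of non-line projection in Theorem \ref{lem1}.''
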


\section{Conclusion and discussions}\label{sec:conclude}

The key to our approach is Theorem \ref{tom} which essentially reduces 3D discrete tomographic phase retrieval to computed tomography (CT).

{Uniqueness condition for CT (Theorem \ref{thm:CT}) sets a lower bound $n$ on the number of diffraction patterns needed for tomographic phase retrieval since  each diffraction pattern contains no more information than the corresponding projection ($f_\bt$ determines the autocorrelation of $f_\bt$ but not vice versa). Therefore, Theorems  \ref{tom2} is nearly, if not exactly,  sharp in terms of the required number of diffraction patterns}.

\commentout{
To contrast with the continuum setting, 
a compactly supported function is uniquely determined by  the Fourier transform (magnitude \& phase) in any infinite set of projections (\cite{Helgason}, Proposition 7.8) while
for any finite set of projections, counterexamples to unique determination can be constructed (\cite{Helgason}, Proposition 7.9).

As a consequence, uniqueness with Fourier magnitude data in the continuum setting  would certainly require an infinite number of projections. It is not currently known, however, that which infinite sets of projections guarantee uniqueness with Fourier magnitude measurements. 
}

\commentout{  It is worthwhile to compare our results with those from geometric tomography in which the object function is {\em the indicator function of a convex set}. It is shown that  the autocorrelation function uniquely determines, up to translation and reflection, 2D convex bodies \cite{AB09} as well as 3D  convex polyhedra \cite{Bianchi09}.  It is still unknown whether all 3D convex bodies  are determined (most are in the sense of Baire category \cite{covex-proj}). Bianchi \cite{Bianchi05}, however, constructed  convex polytopes in $\IR^d, d\ge 4$, that are not determined by autocorrelation. Moreover, there exist noncongruent nonconvex polygons, even horizontally and vertically convex polyominoes, with the same covariogram, indicating that the convexity assumption also cannot be significantly weakened \cite{Gardner}.}


{On the other hand, Theorem \ref{tom2} (condition \eqref{strongCT} in particular)  defines a fairly general class of measurement schemes. A natural question is, Which one is optimal and in what sense? This will be the subject of our forthcoming study. 
}

{  In realistic measurements, noise is inevitable. And because of the significant amount of oversampling (cf. \eqref{nyquist1}-\eqref{nyquist2}), independent noise in the data necessarily results in an inconsistent  inverse problem, i.e. there is no object whose tomographic data coincide with  the given noisy data. This is characteristic of the ill-posedness of inverse problems in general. Noise stability analysis for tomographic phase retrieval is technically challenging and currently lacking. In practice, however,  noisy reconstruction can often be effectively performed by utilizing prior information and regularization such as Tikhonov regularization \cite{2018}. 

Other useful regularizations include sparsity-promoting priors such as $\ell_1$ and total variation  regularizations. 
In our setting, for a sparse object whose projection $f_\bt$ is supported on a  much smaller set than $\IZ_p^2$, the diffraction pattern can be measured at a comparably small (up to a poly-logorithmic factor of $n$), randomly selected subset of $\IZ_{2p-1}^2$ from which the autocorrelation of $f_\bt$ can be recovered by $\ell_1$-minimization method  with the random partial Fourier matrix as the sampling matrix in \eqref{auto} (see \cite{Candes2,Trop}). The total-variation regularization can be used for gradient-sparse objects  \cite{TV}. Similar approaches have been implemented in 3D digital holography \cite{2016}, \cite{2020}.  }



\section*{Acknowledgments}
I thank Qi Yu for helpful discussions about Example \ref{Yu}.  The research is supported by the Simons Foundation grant FDN 2019-24 and the NSF grant CCF-1934568.

\commentout{ 
\begin{appendix}

\section{Sparse sampling with ESPRIT}\label{app:sparse}

For simplicity of notation, we present sparse reconstruction of diffraction patterns by the method of {\em Estimation of Signal Parameters via Rotational Invariance Techniques} (ESPRIT) in the one-dimensional setting. 

Consider the 1D diffraction pattern written as
\beq
|\widehat R(w)|^2= \sum_{k\in \IZ_{2p-1}} R(k)  e^{-\im 2\pi k w}, \quad w\in  \Big[-\half,\half \Big],
   \label{1D}
   \eeq
   where
   \[
 R(k)=  \lt\{\sum_{k'\in \IZ_p} f_\bt(k'+k)\overline{f_\bt(k')}\rt\}
\]
is the autocorrelation function of $f_\bt$. 

Suppose $R$ has a sparse support in the sense that the cardinality  $s$ of $\cS:=\supp(R)$ is much smaller than $2p-1$. We want to demonstrate that the full diffraction pattern can be {\em stably}  reconstructed  from $2s-1$ samples. 
\end{appendix}
}

\end{document}